\newcommand{\argmax}{\operatornamewithlimits{argmax}}
\newtheorem{thm}{Theorem}
\newtheorem{prop}{Proposition}
\begin{document}

\title{Benefits of Bias: \\ Towards Better Characterization of Network Sampling}

\numberofauthors{2} 
\author{
\alignauthor Arun S. Maiya\\
       \affaddr{Dept. of Computer Science}\\
       \affaddr{University of Illinois at Chicago}\\
       \email{arun@maiya.net}
\alignauthor Tanya Y. Berger-Wolf\\
       \affaddr{Dept. of Computer Science}\\
       \affaddr{University of Illinois at Chicago}\\
       \email{tanyabw@cs.uic.edu}
}

\newcounter{copyrightbox}

\maketitle
\begin{abstract}
From social networks to P2P systems, network sampling arises in many settings.  We present a detailed study on the nature of biases in network sampling strategies to shed light on how best to sample from networks.  We investigate connections between specific biases and various measures of structural representativeness. We show that certain biases are, in fact, beneficial for many applications, as they ``push'' the sampling process towards inclusion of desired properties.  Finally, we describe how these sampling biases can be exploited in several, real-world applications including disease outbreak detection and market research.
\end{abstract}
 
\category{H.2.8}{Database Management}{Database Applications}[Data Mining]
\terms{Algorithms; Experimentation, Measurement} 
\keywords{sampling, bias, social network analysis, complex networks, graph mining, link mining, online sampling, crawling}

\section{Introduction and Motivation}
We present a detailed study on the nature of biases in network sampling strategies to shed light on how best to sample from networks.  A \emph{network} is a system of interconnected entities typically represented mathematically as a graph:  a set of vertices and a set of edges among the vertices.  Networks are ubiquitous and arise across numerous and diverse domains.  For instance, many Web-based social media, such as online social networks, produce large amounts of data on interactions and associations among individuals. Mobile phones and location-aware devices produce copious amounts of data on both communication patterns and physical proximity between people. In the domain of biology also, from neurons to proteins to food webs, there is now access to large networks of associations among various entities and a need to analyze and understand these data.

With advances in technology, pervasive use of the Internet, and the proliferation of mobile phones and location-aware devices, networks under study today are not only substantially larger than those in the past, but sometimes exist in a decentralized form (e.g. the network of blogs or the Web itself).  For many networks, their global structure is not fully visible to the public and can only be accessed through ``crawls'' (e.g. online social networks).  These factors can make it prohibitive to analyze or even access these networks in their entirety.  How, then, should one proceed in analyzing and mining these network data?  One approach to addressing these issues is \emph{sampling}:  inference using small subsets of nodes and links from a network.

From epidemiological applications \cite{Cohen2003Efficient} to Web crawling \cite{Boldi2004Do} and P2P search \cite{Stutzbach2009Unbiased}, network sampling arises across many different settings.  In the present work, we focus on a particular line of investigation that is concerned with constructing samples that match critical structural properties of the original network. Such samples have numerous applications in data mining and information retrieval.  In \cite{Krishnamurthy2007Sampling}, for example, structurally-representative samples were shown to be effective in inferring network protocol performance in the larger network and significantly improving the efficiency of protocol simulations.  In Section \ref{sec:applications}, we discuss several additional applications.  Although there have been a number of recent strides in work on network sampling (e.g. \cite{Ahmed2010Timebased,Hubler2008Metropolis,Leskovec2006Sampling,Krishnamurthy2007Sampling}), there is still very much that requires better and deeper understanding.  Moreover, many networks under analysis, although treated as complete, are, in fact, {\em samples} due to limitations in data collection processes.  Thus, a more refined understanding of network sampling is of general importance to network science.    Towards this end, we conduct a detailed study on \emph{network sampling biases}.  There has been a recent spate of work focusing on \emph{problems} that arise from network sampling biases including how and why biases should be avoided \cite{Gjoka2011Walk,Stutzbach2009Unbiased,Costenbader2003Stability,Kurant2010Bias,Achlioptas2005Bias,Stumpf2005Subnets}.  Our work differs from much of this existing literature in that, for the first time in a comprehensive manner, we examine network sampling bias as an \emph{asset to be exploited}.  We argue that  biases of certain sampling strategies can be advantageous if they ``push'' the sampling process towards inclusion of specific properties of interest.\footnote{This is similar to the role of bias in stratified sampling in classical statistics.}  Our main aim in the present work is to identify and understand the connections between specific sampling biases and specific definitions of structural representativeness, so that these biases can be leveraged in practical applications.

\noindent
\textbf{Summary of Findings.}  We conduct a detailed investigation of network sampling biases. We find that bias towards high \emph{expansion} (a concept from expander graphs) offers several unique advantages over other biases such as those toward high degree nodes.  We show both empirically and analytically that such an expansion bias ``pushes'' the sampling process towards new, undiscovered clusters and the discovery of wider portions of the network.  In other analyses, we show that a simple sampling process that selects nodes with many connections from those already sampled is often a reasonably good approximation to directly sampling high degree nodes and locates well-connected (i.e. high degree) nodes significantly faster than most other methods.   We also find that the breadth-first search, a widely-used sampling and search strategy, is surprisingly among the most dismal performers in terms of both discovering the network and accumulating critical, well-connected nodes.  Finally, we describe ways in which some of our findings can be exploited in several important applications including disease outbreak detection and market research.  A number of these aforementioned findings are surprising in that they are in stark contrast to conventional wisdom followed in much of the existing literature (e.g. \cite{Najork2001Breadthfirst,Mislove2007Measurement,Adamic2001Search,Cohen2003Efficient,Kurant2010Bias}).

\section{Related Work}
\label{sec:relatedwork}
Not surprisingly, network sampling arises across many diverse areas.  Here, we briefly describe some of these different lines of research.

\noindent
\textbf{Network Sampling in Classical Statistics.}  The concept of sampling networks first arose to address scenarios where one needed to study hidden or difficult-to-access populations (e.g. illegal drug users, prostitutes).  For recent surveys, one might refer to \cite{Kolaczyk2009Statistical,Frank2005Models}. The work in this area focuses almost exclusively on acquiring unbiased estimates related to variables of interest attached to each network node.  The present work, however, focuses on inferring properties related to the \emph{network itself} (many of which are not amenable to being fully captured by simple attribute frequencies).  Our work, then, is much more closely related to \emph{representative subgraph sampling}.

\noindent
\textbf{Representative Subgraph Sampling.}  In recent years, a number of works have focused on \emph{representative subgraph sampling}:  constructing samples in such a way that they are condensed representations of the original network (e.g. \cite{Lee2006Statistical,Ahmed2010Timebased,Hubler2008Metropolis,Leskovec2006Sampling,Krishnamurthy2007Sampling}).  Much of this work focuses on how best to produce a ``universal'' sample representative of \emph{all} structural properties in the original network.  By contrast, we subscribe to the view that no single sampling strategy may be appropriate for all applications.  Thus, our aim, then, is to better understand the \emph{biases} in specific sampling strategies to shed light on how best to leverage them in practical applications.

\noindent
\textbf{Unbiased Sampling.}  There has been a relatively recent spate of work (e.g. \cite{Stutzbach2009Unbiased,Gjoka2011Walk,Henzinger2000Nearuniform}) that focuses on constructing uniform random samples in scenarios where nodes cannot be easily drawn randomly (e.g. settings such as the Web where nodes can only be accessed through crawls).  These strategies, often based on modified random walks, have been shown to be effective for various frequency estimation problems (e.g. inferring the proportion of pages of a certain language in a Web graph \cite{Henzinger2000Nearuniform}).  However, as mentioned above, the present work focuses on using samples to infer structural (and functional) properties of the \emph{network itself}.  In this regard, we found these unbiased methods to be less effective during preliminary testing.  Thus, we do not consider them and instead focus our attention on other more appropriate sampling strategies (such as those mentioned in  \emph{representative subgraph sampling}).

\noindent
\textbf{Studies on Sampling Bias.} Several studies have investigated \emph{biases} that arise from various sampling strategies (e.g. \cite{Lakhina2003Sampling,Costenbader2003Stability,Kurant2010Bias,Achlioptas2005Bias,Stumpf2005Subnets}).  For instance, \cite{Stumpf2005Subnets} showed that, under the simple sampling strategy of picking nodes at random from a scale-free network (i.e. a network whose degree distribution follows the power law), the resultant subgraph sample will \emph{not} be scale-free.  The authors of \cite{Lakhina2003Sampling,Achlioptas2005Bias} showed the converse is true under traceroute sampling.  Virtually all existing results on network sampling bias focus on its negative aspects.  By contrast, we focus on the \emph{advantages} of certain biases and ways in which they can be exploited in network analysis.

\noindent
\textbf{Property Testing.}  Work on sampling exists in the fields of combinatorics and graph theory and is centered on the notion of \emph{property testing} in graphs \cite{Lovasz2009Very}.  Properties such as those typically studied in graph theory, however, may be less useful for the analysis of \emph{real-world} networks (e.g. the exact meaning of, say, $k$-colorability \cite{Lovasz2009Very} within the context of a social network is unclear).  Nevertheless, theoretical work on property testing in graphs is excellently surveyed in \cite{Lovasz2009Very}.

\noindent
\textbf{Other Areas.}  Decentralized search (e.g. searching unstructured P2P networks) and Web crawling can both be framed as network sampling problems, as both involve making decisions from subsets of nodes and links from a larger network.  Indeed, network sampling itself can be viewed as a problem of information retrieval, as the aim is to seek out a subset of nodes that either individually or collectively match some criteria of interest.  Several of the sampling strategies we study in the present work, in fact, are graph search algorithms (e.g. breadth-first search).   Thus, a number of our findings discussed later have implications for these research areas (e.g. see \cite{Maiya2010Expansion}).  For reviews on decentralized search both in the contexts of complex networks and P2P systems, one may refer to \cite{Kleinberg2006Complex} and \cite{Tsoumakos2006Analysis}, respectively.  For examples of connections between Web crawling and network sampling, see \cite{Najork2001Breadthfirst,Cho1998Efficient,Boldi2004Do}.

\section{Preliminaries}

\subsection{Notations and Definitions}
We now briefly describe some notations and definitions used throughout this paper.

\begin{defn}
\label{defn:network}
$G=(V,\,E)$ is a \emph{network} or \emph{graph} where $V$ is set of vertices and $E \subseteq V \times V$ is a set of edges.  
\end{defn}

\begin{defn}
\label{defn:sample}
A \emph{sample} $S$ is a subset of vertices, $S \subset V$.
\end{defn}

\begin{defn}
\label{defn:neighborhood}
$N(S)$ is the \emph{neighborhood} of $S$ if $N(S)=\{w \in V-S: \; \exists v \in S \; s.t. \; (v,\,w) \in E\}$. 
\end{defn}

\begin{defn}
\label{defn:inducedsubgraph}
$G_S$ is the \emph{induced subgraph} of $G$ based on the sample $S$ if $G_S = (S,\,E_S)$ where the vertex set is $S \subset V$ and the edge set is $E_S = (S \times S) \cap E$.  The induced subgraph of a sample may also be referred to as a \emph{subgraph sample}.
\end{defn}

\subsection{Datasets}

We study sampling biases in a total of twelve different networks:  a power grid (PowerGrid \cite{Watts1998Collective}), a Wikipedia voting network (WikiVote \cite{LeskovecStanford}), a PGP trust network (PGP \cite{Boguna2004Models}), a citation network (HEPTh \cite{LeskovecStanford}), an email network (Enron \cite{LeskovecStanford}), two co-authorship networks (CondMat \cite{LeskovecStanford} and AstroPh \cite{LeskovecStanford}), two P2P file-sharing networks (Gnutella04 \cite{LeskovecStanford} and Gnutella31 \cite{LeskovecStanford}),  two online social networks (Epinions \cite{LeskovecStanford} and Slashdot \cite{LeskovecStanford} ), and a product co-purchasing network (Amazon \cite{LeskovecStanford}).  These datasets were chosen to represent a rich set of diverse networks from different domains.  This diversity allows a more comprehensive study of network sampling and thorough assessment of the performance of various sampling strategies in the face of varying network topologies.  Table \ref{tab:datasets} shows characteristics of each dataset.  All networks are treated as undirected and unweighted.

\begin{table} [th]
\centering
\footnotesize
\begin{tabular}{l|c|c|c|c|c} \hline \hline
Network        &  N     & D                        & PL   & CC       & AD \\ \hline
 PowerGrid        & 4941   & 0.0005                   & 19   & 0.11     & 2.7 \\
 WikiVote       & 7066   & 0.004                   & 3.3     & 0.21     & 28.5 \\
 PGP          & 10,680 & 0.0004                   & 7.5    & 0.44     & 4.6 \\
 Gnutella04          & 10,876 & 0.0006                   & 4.6    & 0.01     & 7.4 \\
AstroPh           & 17,903 & 0.0012                   & 4.2  & 0.67     & 22.0 \\
CondMat           & 21,363 & 0.0004                   & 5.4  & 0.70     & 8.5 \\
 HEPTh             & 27,400 & 0.0009                   &  4.3      & 0.34     & 25.7 \\
 Enron             & 33,696 & 0.0003                   & 4.0    & 0.71     & 10.7 \\
 Gnutella31          & 62,561  & 0.00008                  & 5.9    & 0.01      & 4.7 \\
 Epinions          & 75,877  & 0.0001                  & 4.3    & 0.26      & 10.7 \\
 Slashdot         & 82,168  & 0.0001                  & 4.1    & 0.10     & 12.2 \\ 
 Amazon         & 262,111  & 0.00003                  & 8.8    & 0.43     & 6.9 \\ \hline \hline
\end{tabular}
\caption{{\footnotesize Network Properties.  \textbf{Key:}  \emph {N= \# of nodes, D= density, PL = characteristic path length, CC = local clustering coefficient, AD = average degree.} }}
\label{tab:datasets}
\vskip -0.15in
\end{table}

\section{Network Sampling}
\label{sec:sampling}

In the present work, we focus on a particular class of sampling strategies, which we refer to as \emph{link-trace sampling}.  In \emph{link-trace sampling}, the next node selected for inclusion into the sample is always chosen from among the set of nodes directly connected to those already sampled.  In this way, sampling proceeds by tracing or following links in the network.  This concept can be defined formally.

\begin{defn}
\label{defn:linktracesampling}
Given an integer $k$ and an initial node (or seed) $v \in V$ to which $S$ is initialized (i.e. $S=\{v\}$), a \emph{link-trace sampling} algorithm, $\mathcal{A}$, is a process by which nodes are iteratively selected from among the current neighborhood $N(S)$ and added to $S$ until $|S|=k$.    
\end{defn}

\emph{Link-trace sampling} may also be referred to as \emph{crawling} (since links are ``crawled'' to access nodes) or viewed as \emph{online} sampling (since the network $G$ reveals itself iteratively during the course of the sampling process).  The key advantage of sampling through link-tracing, then, is that complete access to the network in its entirety is \emph{not} required.  This is beneficial for scenarios where the network is either large (e.g. an online social network), decentralized (e.g. an unstructured P2P network), or both (e.g. the Web).  

As an aside, notice from Definition \ref{defn:linktracesampling} that we have implicitly assumed that the neighbors of a given node can be obtained by visiting that node during the sampling process (i.e. $N(S)$ is known).  This, of course, accurately characterizes most real scenarios.  For instance, neighbors of a Web page can be gleaned from the  hyperlinks on a visited page and neighbors of an individual in an online social network can be acquired by viewing (or ``scraping'') the friends list. 

Having provided a general definition of \emph{link-trace sampling}, we must now address \emph{which} nodes in $N(S)$ should be preferentially selected at each iteration of the sampling process.  This choice will obviously directly affect the properties of the sample being constructed.  We study seven different approaches - all of which are quite simple yet, at the same time, ill-understood in the context of real-world networks.

\noindent
\textbf{Breadth-First Search (BFS).} Starting with a single seed node, the BFS explores the neighbors of visited nodes.  At each iteration, it traverses an unvisited neighbor of the \emph{earliest} visited node \cite{Cormen2003Introduction}.  In both \cite{Kurant2010Bias} and \cite{Najork2001Breadthfirst}, it was empirically shown that BFS is biased towards high-degree and high-PageRank nodes.  BFS is used prevalently to crawl and collect networks (e.g. \cite{Mislove2007Measurement}).  

\noindent
\textbf{Depth-First Search (DFS).}  DFS is similar to BFS, except that, at each iteration, it visits an unvisited neighbor of the most \emph{recently} visited node \cite{Cormen2003Introduction}.

\noindent
\textbf{Random Walk (RW).} A random walk simply selects the next hop uniformly at random from among the neighbors of the current node \cite{Lovasz1994Random}. 

\noindent
\textbf{Forest Fire Sampling (FFS).}  FFS, proposed in \cite{Leskovec2006Sampling}, is essentially a probabilistic version of BFS.  At each iteration of a BFS-like process, a neighbor $v$ is only explored according to some ``burning'' probability $p$.  At $p=1$, FFS is identical to BFS.  We use $p=0.7$, as recommended in \cite{Leskovec2006Sampling}.

\noindent
\textbf{Degree Sampling (DS).}  The DS strategy involves greedily selecting the node $v \in N(S)$ with the highest degree (i.e. number of neighbors).  A variation of DS was analytically and empirically studied as a P2P search algorithm in \cite{Adamic2001Search}.  Notice that, in order to select the node $v\in N(S)$ with the highest degree, the process must know $|N(\{v\})|$ for each $v \in N(S)$.  That is, knowledge of $N(N(S))$ is required at each iteration.  As noted in \cite{Adamic2001Search}, this requirement is acceptable for some domains such as P2P networks and certain social networks.  The DS method is also feasible in scenarios where 1) one is interested in efficiently ``downsampling'' a network to a connected subgraph, 2) a crawl is repeated and history of the last crawl is available, or 3) the proportion of the network accessed to construct a sample is less important.

\noindent
\textbf{SEC (Sample Edge Count).}  Given the currently constructed sample $S$, how can we select a node $v \in N(S)$ with the highest degree \emph{without} having knowledge of $N(N(S))$?  The SEC strategy tracks the links from the currently constructed sample $S$ to each node $v \in N(S)$ and selects the node $v$ with the most links from $S$.  In other words, we use the degree of $v$ in the induced subgraph of $S \cup \{v\}$ as an approximation of the degree of $v$ in the original network $G$.   Similar approaches have been employed as part of Web crawling strategies with some success (e.g.  \cite{Cho1998Efficient}). 

\noindent
\textbf{XS (Expansion Sampling).} The XS strategy is based on the concept of expansion from work on expander graphs and seeks to greedily construct the sample with the maximal expansion:  $\argmax_{S:\,|S|=k} \frac{|N(S)|}{|S|}$, where $k$ is the desired sample size \cite{Maiya2010Sampling,Hoory2006Expander}.  At each iteration, the next node $v$ selected  for inclusion in the sample is chosen based on the expression:  $$\argmax_{v \in N(S)} |N(\{v\})-(N(S) \cup S)|.$$   Like the DS strategy, this approach utilizes knowledge of $N(N(S))$. In Sections \ref{sec:rep.reach} and \ref{sec:biases.xs}, we will investigate in detail the effect of this expansion bias on various properties of constructed samples.

\section{Evaluating Representativeness}
\label{sec:rep}

What makes one sampling strategy ``better'' than another?  In computer science, ``better'' is typically taken to be structural \emph{representativeness} (e.g. see \cite{Krishnamurthy2007Sampling,Hubler2008Metropolis,Leskovec2005Graphs}).  That is, samples are considered better if they are more representative of structural properties in the original network.  There are, of course, numerous structural properties from which to choose, and, as correctly observed by Ahmed et al. \cite{Ahmed2010Reconsidering},  it is not always clear which should be chosen.  Rather than choosing arbitrary structural properties as measures of representativeness, we select specific measures of representativeness that we view as being potentially useful for real applications.  We divide these measures (described below) into three categories:  Degree, Clustering, and Reach.    For each sampling strategy, we generate 100 samples using randomly selected seeds, compute our measures of representativeness on each sample, and plot the average value as sample size grows.  (Standard deviations of computed measures are discussed in Section \ref{sec:rep.seedsensitivity}.  Applications for these measures of representativeness are discussed later in Section \ref{sec:applications}.)  Due to space limitations and the large number of networks evaluated, for each evaluation measure, we only show results for two datasets that are illustrative of general trends observed in all datasets.  However, full results are available as supplementary material.\footnote{Supplementary material for this paper is available at:\\  \url{http://arun.maiya.net/papers/supp-netbias.pdf}}

\subsection{Degree}
\label{sec:rep.degree}

The degrees (numbers of neighbors) of nodes in a network is a fundamental and well-studied property.    In fact, other graph-theoretic properties such as the average path length between nodes can, in some cases, be viewed as byproducts of degree (e.g. short paths arising from a small number of highly-connected hubs that act as conduits \cite{Barabasi1999Emergence}).  We study two different aspects of degree (with an eye towards real-world applications, discussed in Section \ref{sec:applications}).

\subsubsection{Measures}
\label{sec:rep.degree.measures}
\noindent
\textbf{Degree Distribution Similarity ({\sc DistSim}).}  We take the degree sequence of the sample and compare it to that of the original network using the two-sample Kolmogorov-Smirnov (K-S) D-statistic \cite{Leskovec2006Sampling}, a distance measure.  Our objective here is to measure the agreement between the two degree distributions in terms of both shape and location.  Specifically, the D-statistic is defined as $D = \max_x\{|F(x) - F_S(x)|\}$, where $x$ is the range of node degrees, and $F$ and $F_S$ are the cumulative degree distributions for $G$ and $G_S$, respectively \cite{Leskovec2006Sampling}.  We compute the distribution similarity by subtracting the K-S distance from one.

\noindent
\textbf{Hub Inclusion ({\sc Hubs}).} In several applications, one cares less about matching the \emph{overall} degree distribution and more about accumulating the highest degree nodes into the sample quickly (e.g. immunization strategies \cite{Cohen2003Efficient}).  For these scenarios, sampling is used as a tool for information retrieval.  Here, we evaluate the extent to which sampling strategies accumulate hubs (i.e. high degree nodes) quickly into the sample.  As sample size grows, we track the proportion of the top $K$ nodes accumulated by the sample. For our tests, we use $K=100$.

\subsubsection{Results}
\label{sec:rep.degree.results}
Figure \ref{fig:rep.degree} shows the \emph{degree distribution similarity} ({\sc DistSim}) and \emph{hub inclusion} ({\sc Hubs}) for the Slashdot and Enron datasets.  Note that the SEC and DS strategies, both of which are biased to high degree nodes, perform best on \emph{hub inclusion} (as expected), but are the \emph{worst} performers on the {\sc DistSim} measure (which is also a direct result of this bias). (The XS strategy exhibits a similar trend but to a slightly lesser extent.)   On the other hand, strategies such as BFS, FFS, and RW tend to perform better on {\sc DistSim}, but worse on {\sc Hubs}.  For instance, the DS and SEC strategies locate the majority of the top 100 hubs with sample sizes less than $1\%$ in some cases.  BFS and FFS require sample sizes of over $10\%$ (and the performance differential is larger when locating hubs ranked higher than $100$).  More importantly, no strategy performs best on \emph{both} measures.  This, then,  suggests a tension between goals:  constructing small samples of the most well-connected nodes is in conflict with producing small samples exhibiting representative degree distributions.  More generally, when selecting sample elements, choices resulting in gains for one area can result in losses for another. Thus, these choices must be made in light of how samples will be used - a subject we discuss in greater depth in Section \ref{sec:applications}.  We conclude this section by briefly noting that the trend observed for SEC seems to be somewhat dependent upon the quality and number of hubs actually present in a network (relative to the size of the network, of course). That is, SEC matches DS more closely as degree distributions exhibit longer and denser tails (as shown in Figure \ref{fig:rep.dd}).  We will revisit this in Section \ref{sec:biases.sec}.  (Other strategies are sometimes affected similarly, but the trend is much less consistent.)  In general, we find SEC best matches DS performance on many of the social networks (as opposed to technological networks such as the PowerGrid with few ``good'' hubs, lower average degree, and longer path lengths).  However, further investigation is required to draw firm conclusions on this last point.

\begin{figure}[htb]
  \centering
  \subfloat[Slashdot {\sc (DistSim)} ] {\label{fig:rep.distsim.slashdot}\includegraphics[width=0.2\textwidth]{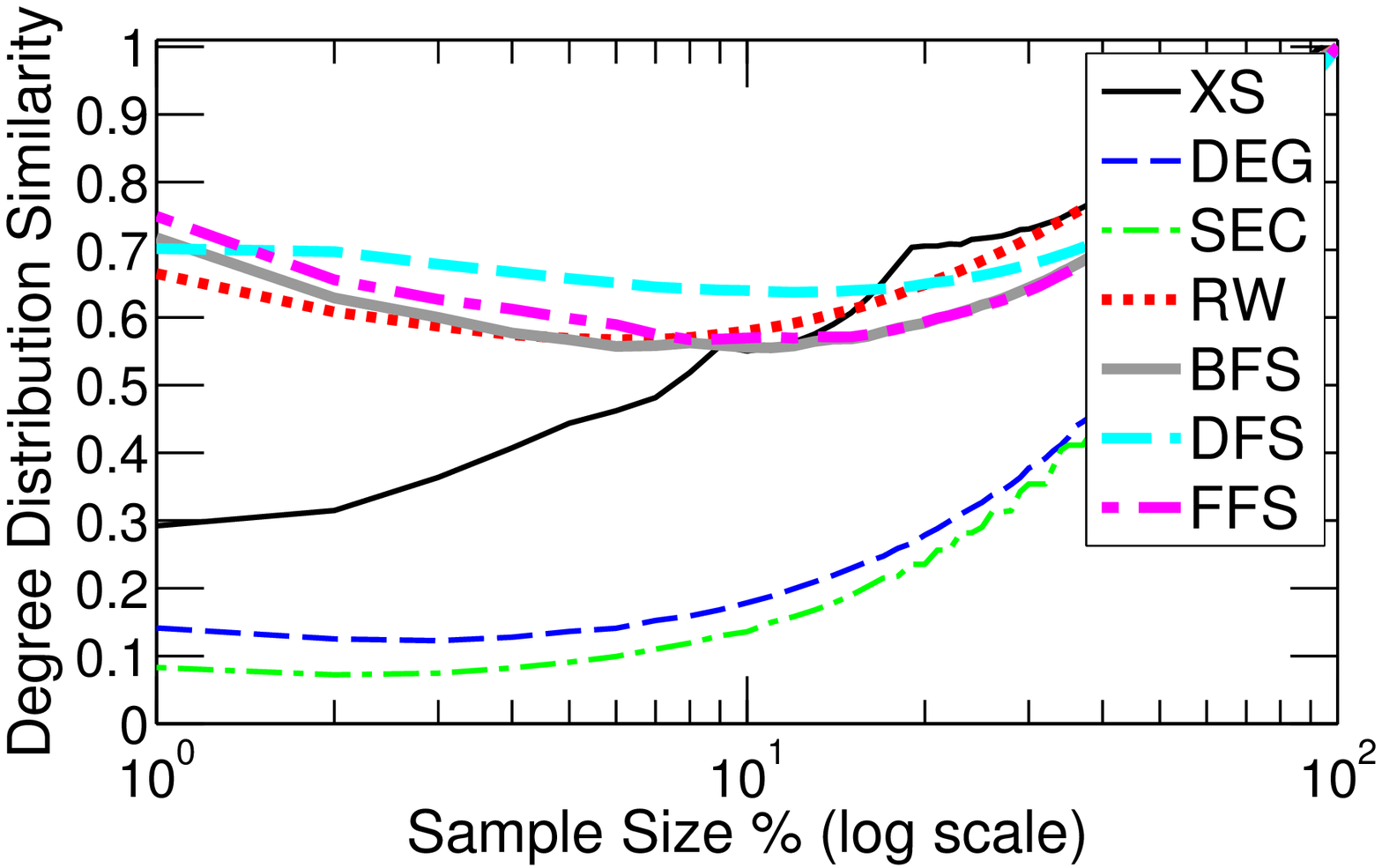}} \vspace{.05cm}
  \subfloat[Enron {\sc (DistSim)}]{\label{fig:rep.distsim.enron}\includegraphics[width=0.2\textwidth]{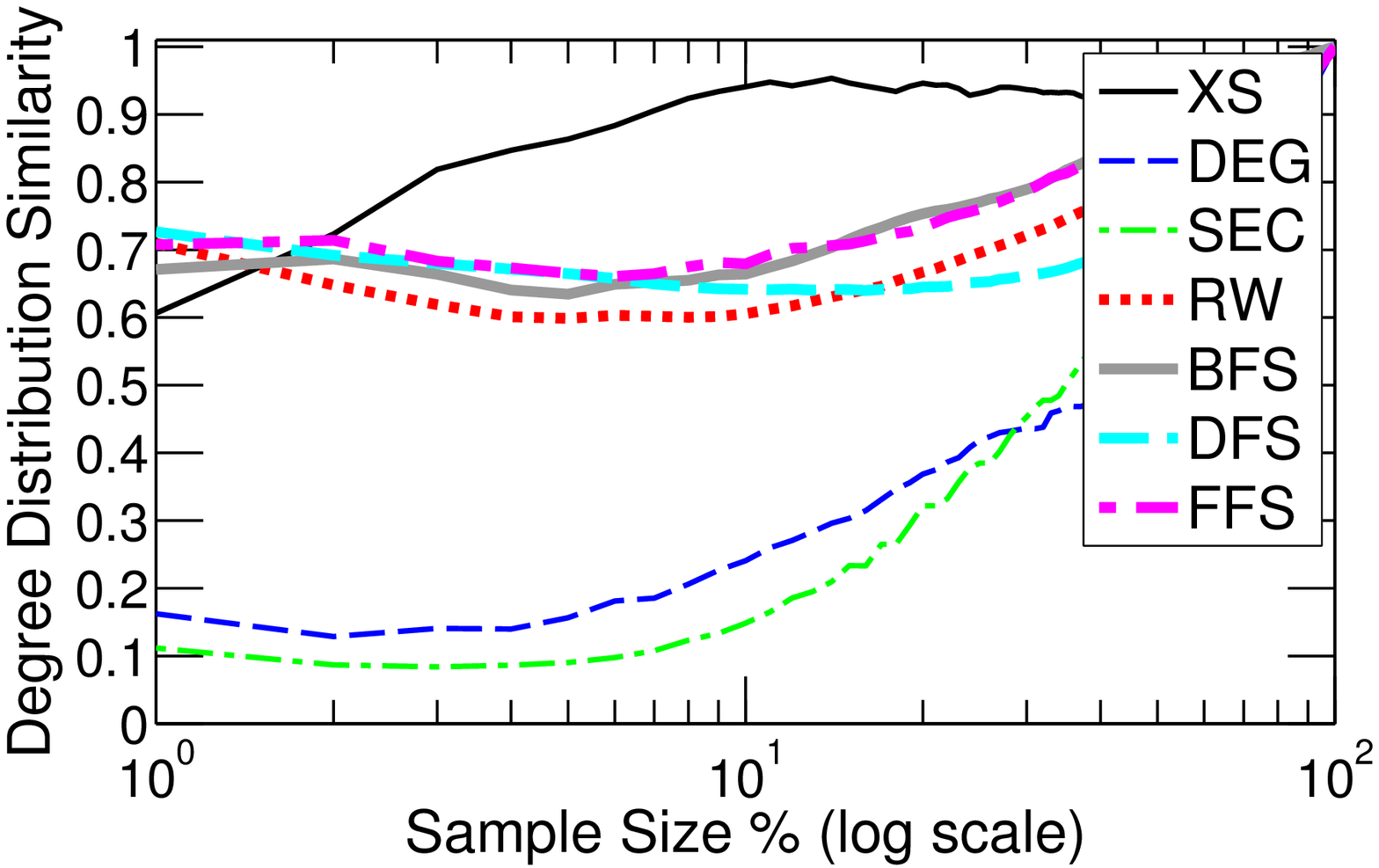}} \\ \vskip -0.01in
  \subfloat[Slashdot {\sc (Hubs)}] {\label{fig:rep.hubs.slashdot}\includegraphics[width=0.2\textwidth]{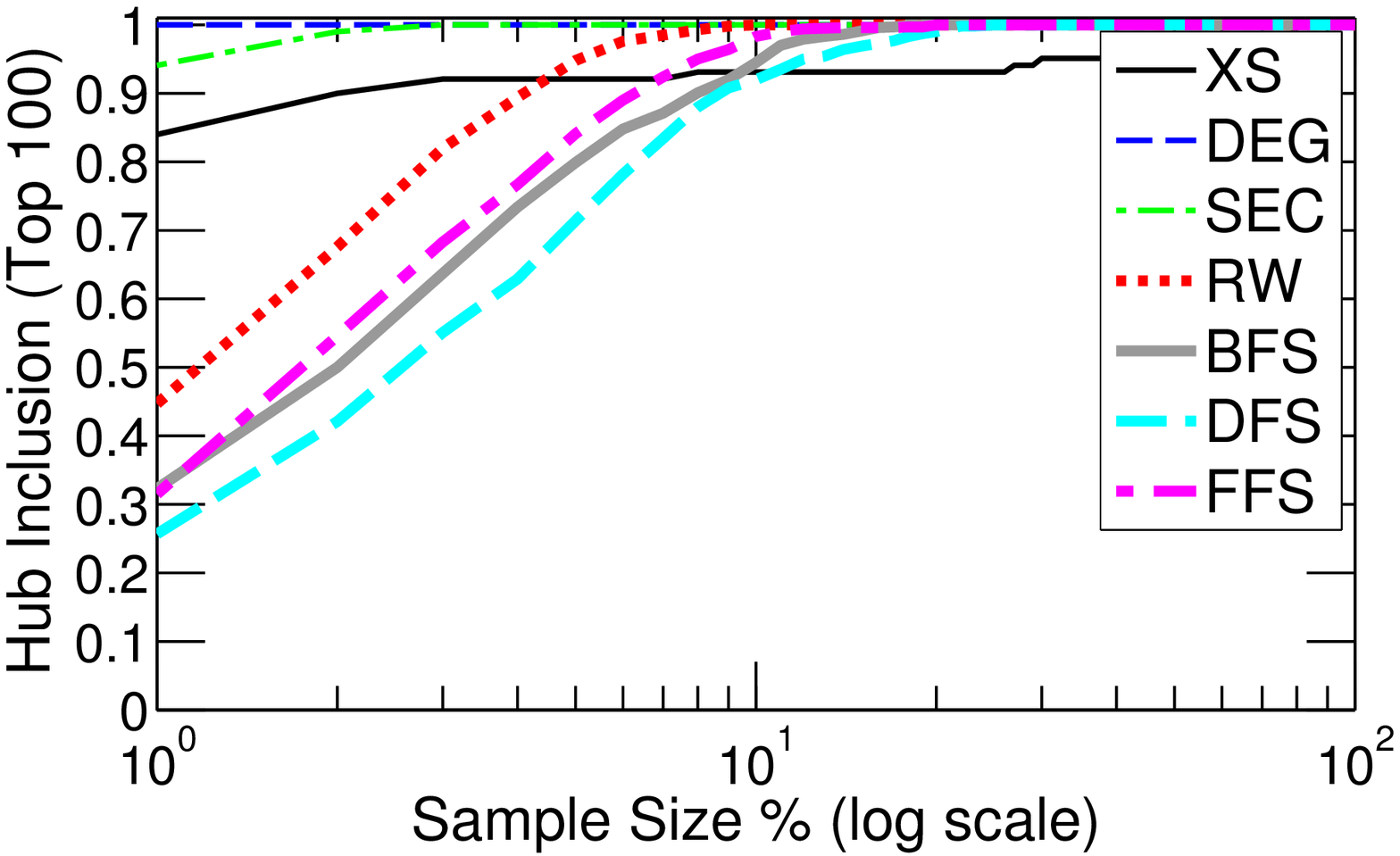}} \vspace{.05cm}
  \subfloat[Enron {\sc (Hubs)} ]{\label{fig:rep.hubs.enron}\includegraphics[width=0.2\textwidth]{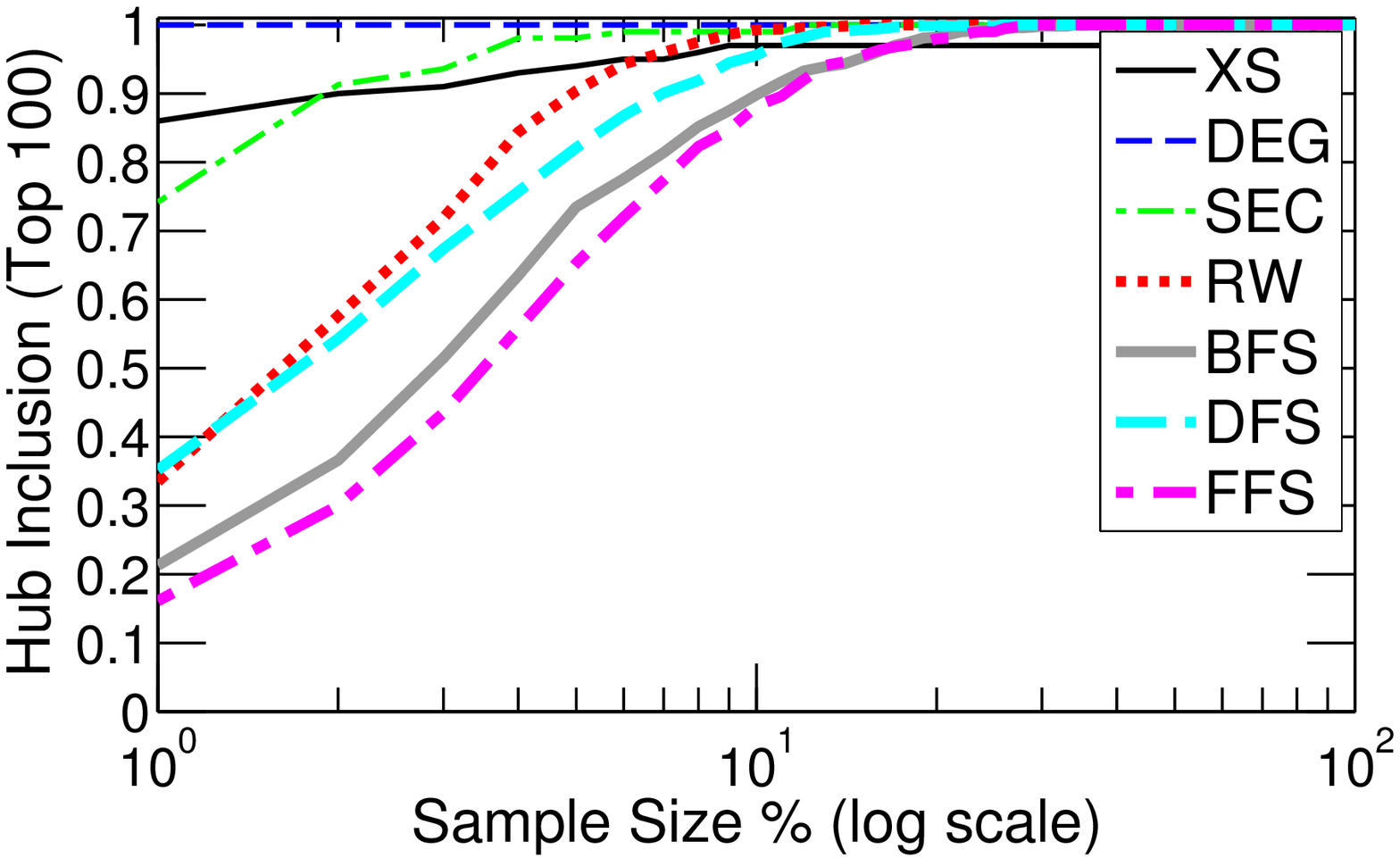}}
\caption{Evaluating {\sc (DistSim)} and  {\sc (Hubs)}.  Results for remaining networks are similar.  }
  \label{fig:rep.degree}
\vskip -0.15in
\end{figure}

\begin{figure}[htb]
  \centering
  \subfloat[Epinions {\sc (Hubs)} ] {\label{fig:rep.hubs.epinions}\includegraphics[width=0.2\textwidth]{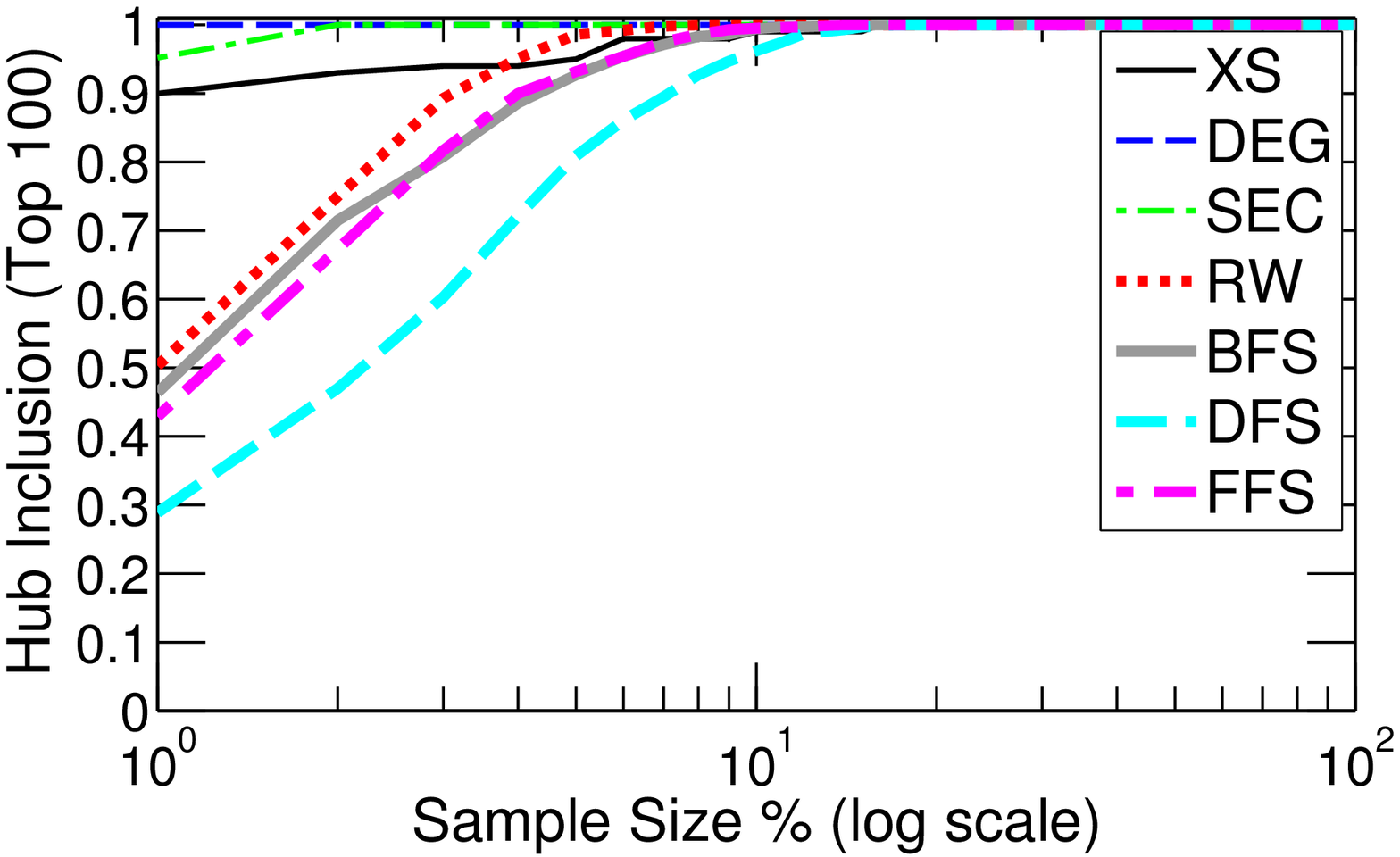}} \vspace{.05cm}
  \subfloat[Epinions {\sc (DD)}]{\label{fig:rep.dd.epinions}\includegraphics[width=0.2\textwidth]{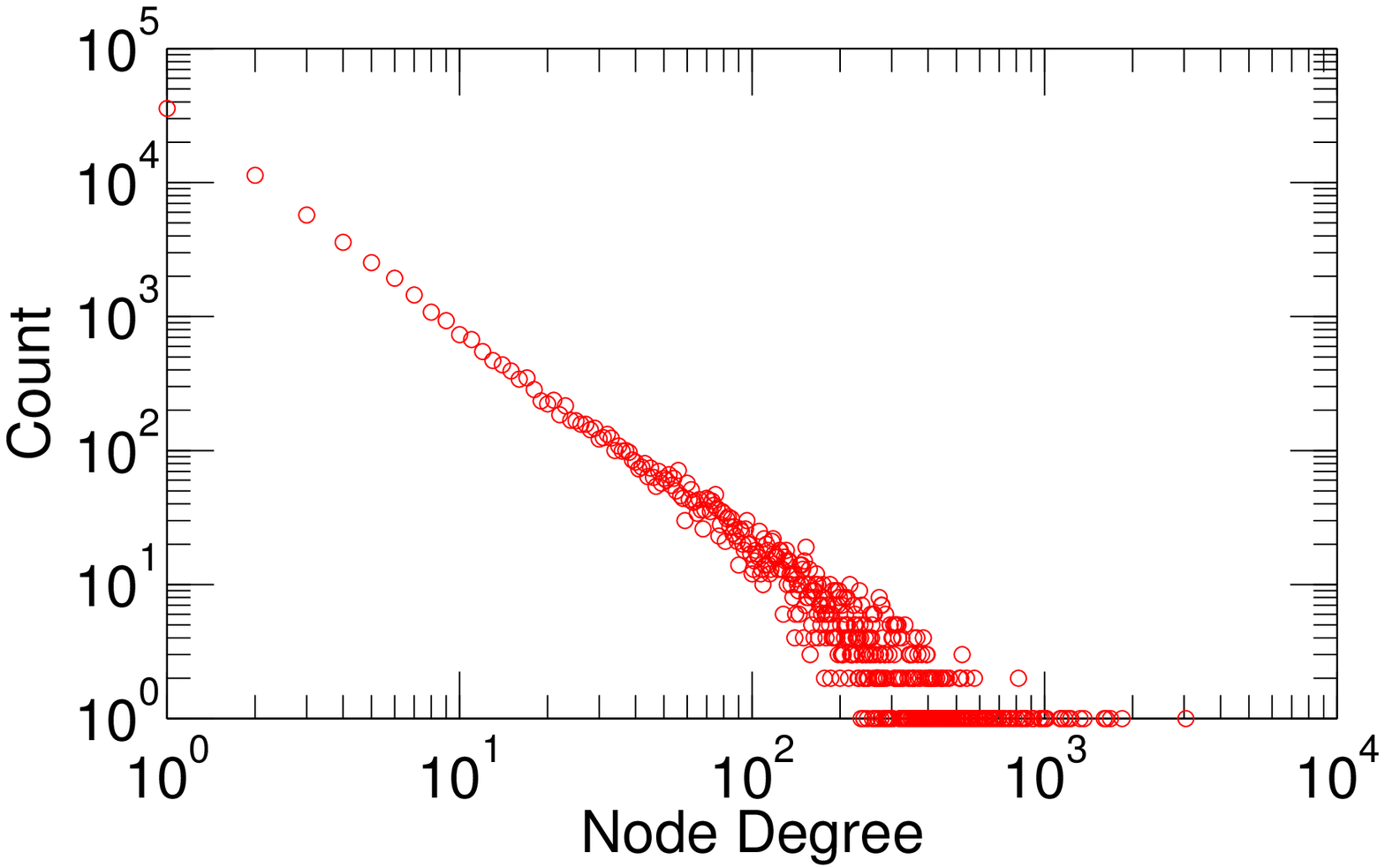}} \\ \vskip -0.01in
  \subfloat[Gnutella31 {\sc (Hubs)}] {\label{fig:rep.hubs.gnutella31}\includegraphics[width=0.2\textwidth]{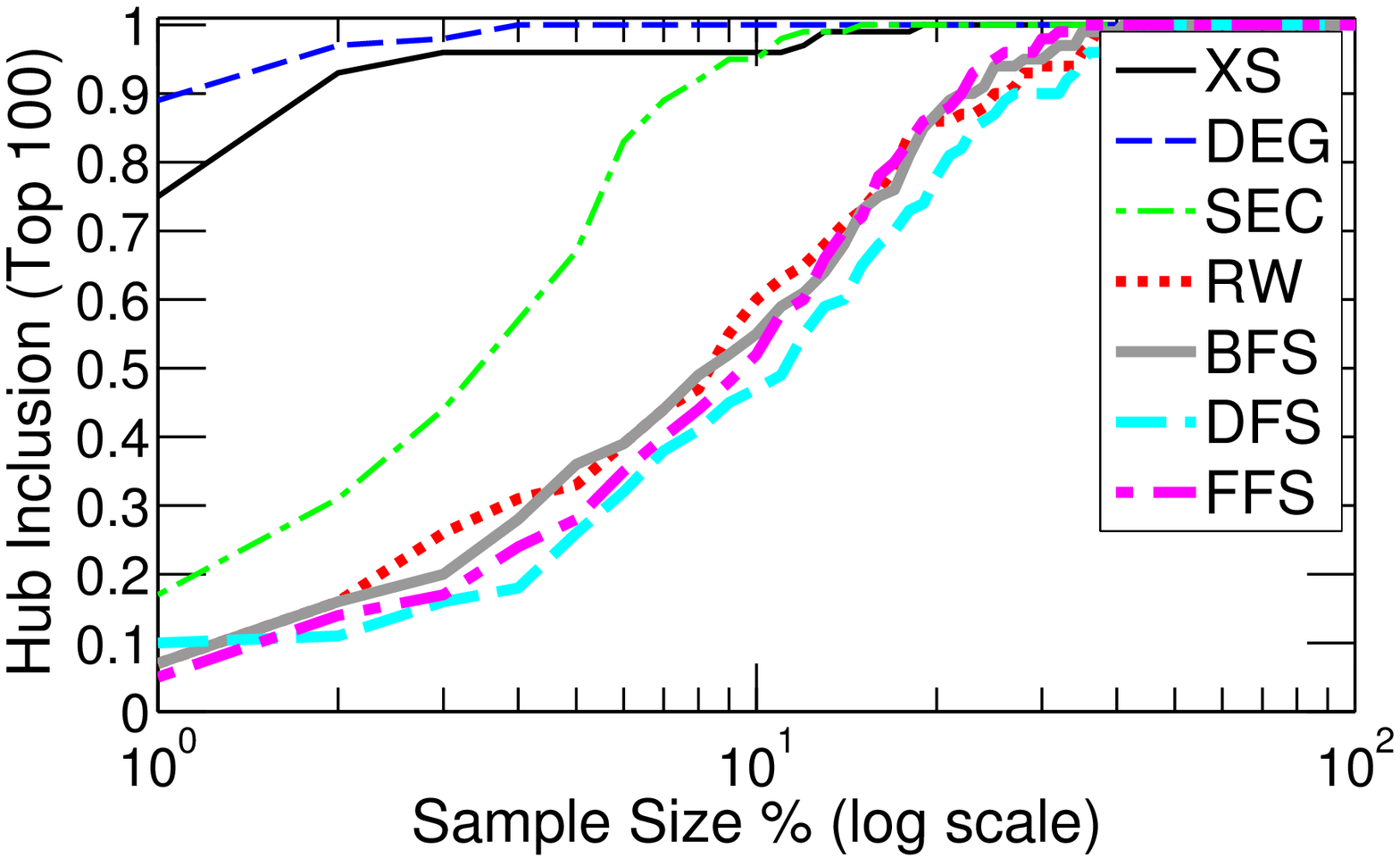}} \vspace{.05cm}
  \subfloat[Gnutella31 {\sc (DD)} ]{\label{fig:rep.dd.gnutella31}\includegraphics[width=0.2\textwidth]{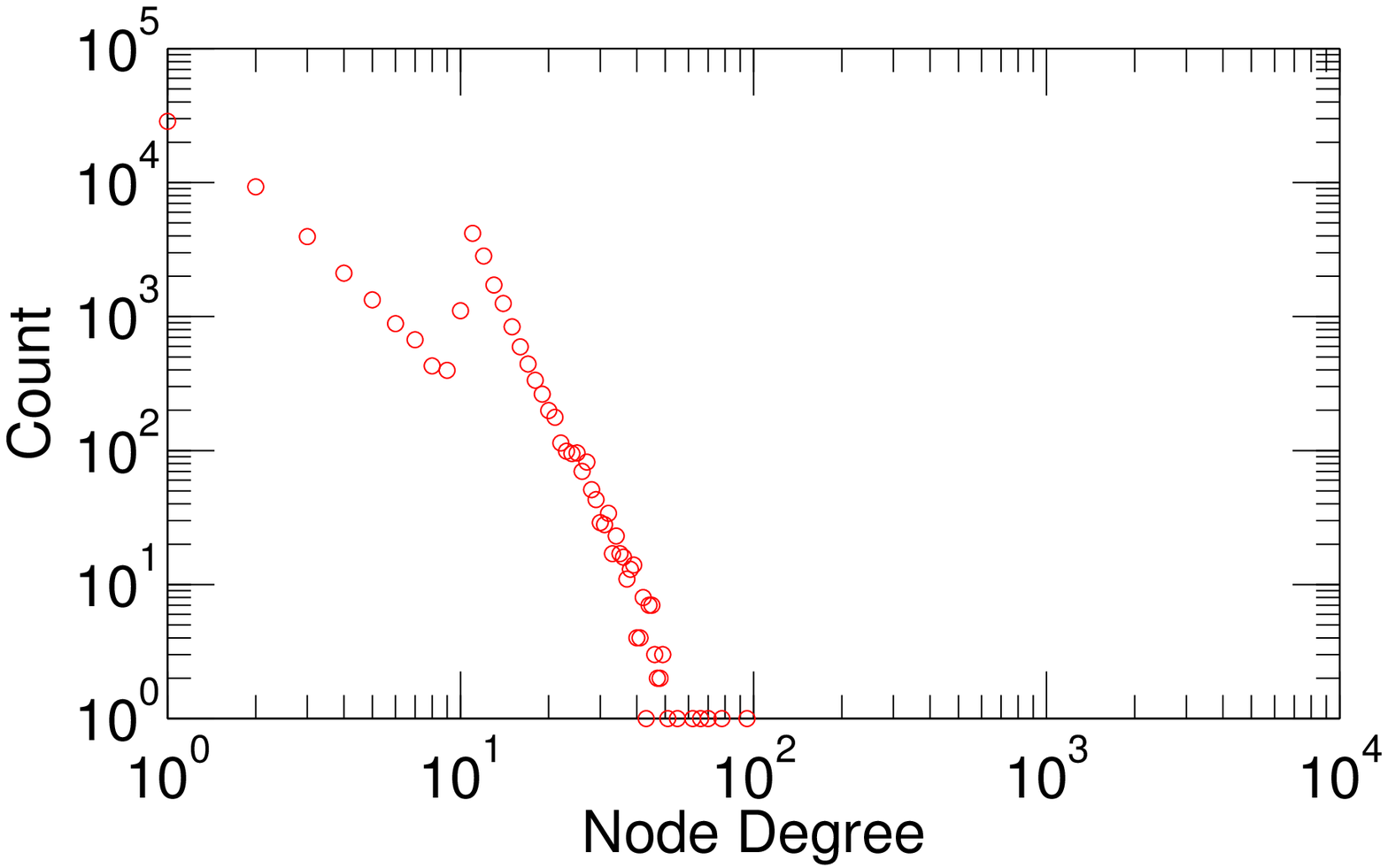}}
\caption{ {\footnotesize Performance of SEC on {\sc Hubs} (shown in green on left) is observed to be dependent on the tail of the degree distributions ({\sc DD}).  SEC matches DS more closely when more and better quality hubs are present.  For {\sc Hubs}, SEC generally performs best on the social networks evaluated.} }
  \label{fig:rep.dd}
\vskip -0.15in
\end{figure}

\subsection{Clustering}
\label{sec:rep.clustering}
Many real-world networks, such as social networks, exhibit a much higher level clustering than what one would expect at random \cite{Watts1998Collective}.  Thus, clustering has been another graph property of interest for some time.  Here, we are interested in evaluating the extent to which samples exhibit the level of clustering present in the original network.  We employ two notions of clustering, which we now describe.

\subsubsection{Measures}
\label{sec:rep.clustering.measures}
\noindent
\textbf{Local Clustering Coefficient {\sc (CCloc)}.}  The local clustering coefficient \cite{Newman2003Structure} of a node captures the extent to which the node's neighbors are also neighbors of each other.  Formally, the local clustering coefficient of a node is defined as $C_L(v) = \frac{2\ell}{d_v(d_v - 1)}$ where $d_v$ is the degree of node $v$ and $\ell$ is the number of links among the neighbors of $v$.  The average local clustering coefficient for a network is simply $\frac{\sum_{v \in V} C_L(v)}{|V|}$.

\noindent
\textbf{Global Clustering Coefficient {\sc (CCglb)}.}  The global clustering coefficient \cite{Newman2003Structure} is a function of the number of triangles in a network.  It is measured as the number of closed triplets divided by the number of connected triples of nodes.

\subsubsection{Results}
\label{sec:rep.clustering.results}

Results for clustering measures are less consistent than for other measures.  Overall, DFS and RW strategies appear to fare relatively better than others.  We do observe that, for many strategies and networks, estimates of clustering are initially higher-than-actual and then gradually decline (see Figure \ref{fig:rep.clustering}).  This agrees with intuition.  Nodes in clusters should intuitively have more paths leading to them and will, thus, be encountered earlier in a sampling process (as opposed to nodes not embedded in clusters and located in the periphery of a network).  This, then, should be taken into consideration in applications where accurately matching clustering levels is important.

\begin{figure}[htb]
  \centering
  \subfloat[WikiVote ({\sc CCloc})] {\label{fig:rep.ccloc.wikivote}\includegraphics[width=0.2\textwidth]{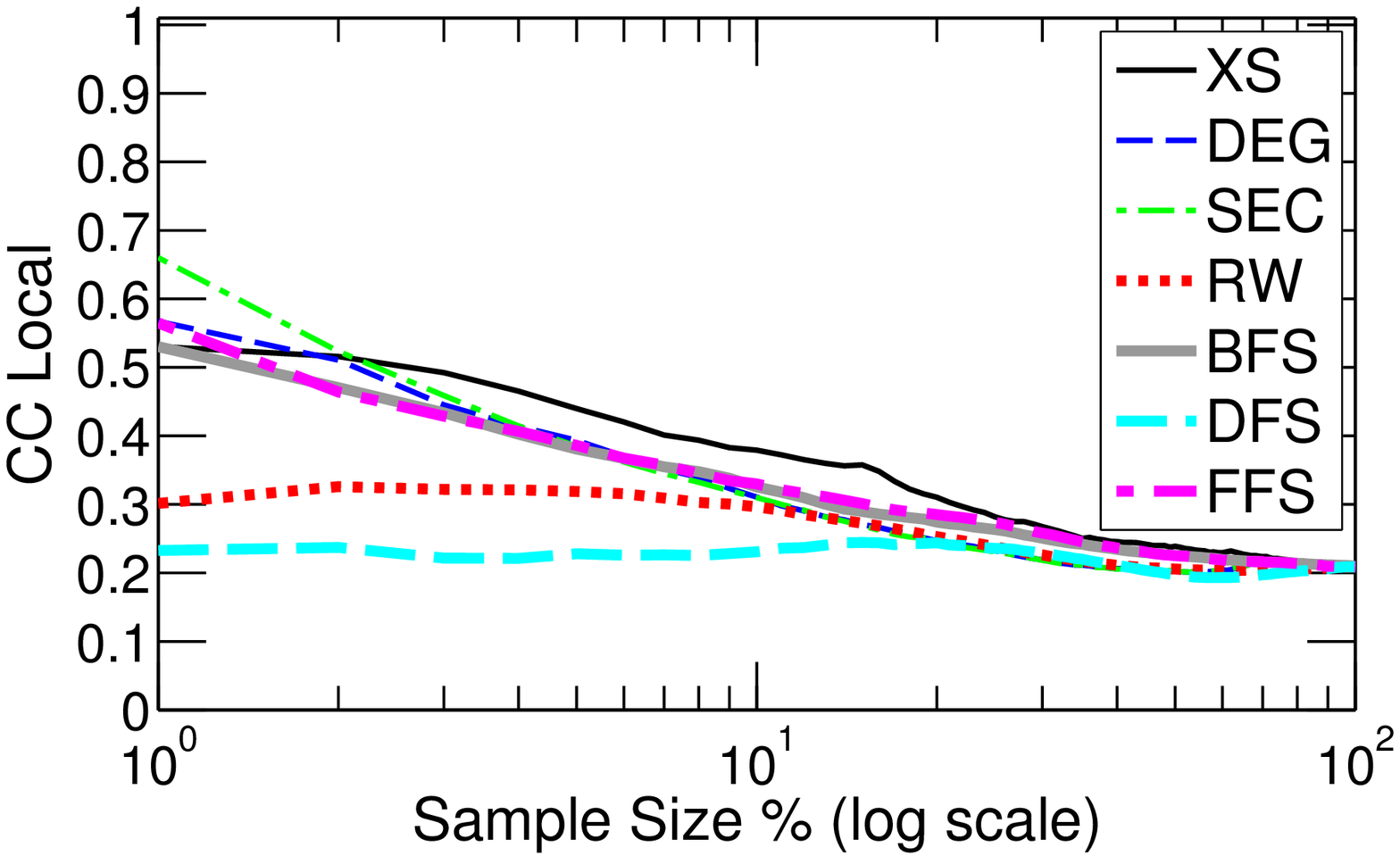}} \vspace{.05cm}
  \subfloat[HEPTh ({\sc CCloc})]{\label{fig:rep.ccloc.hepth}\includegraphics[width=0.2\textwidth]{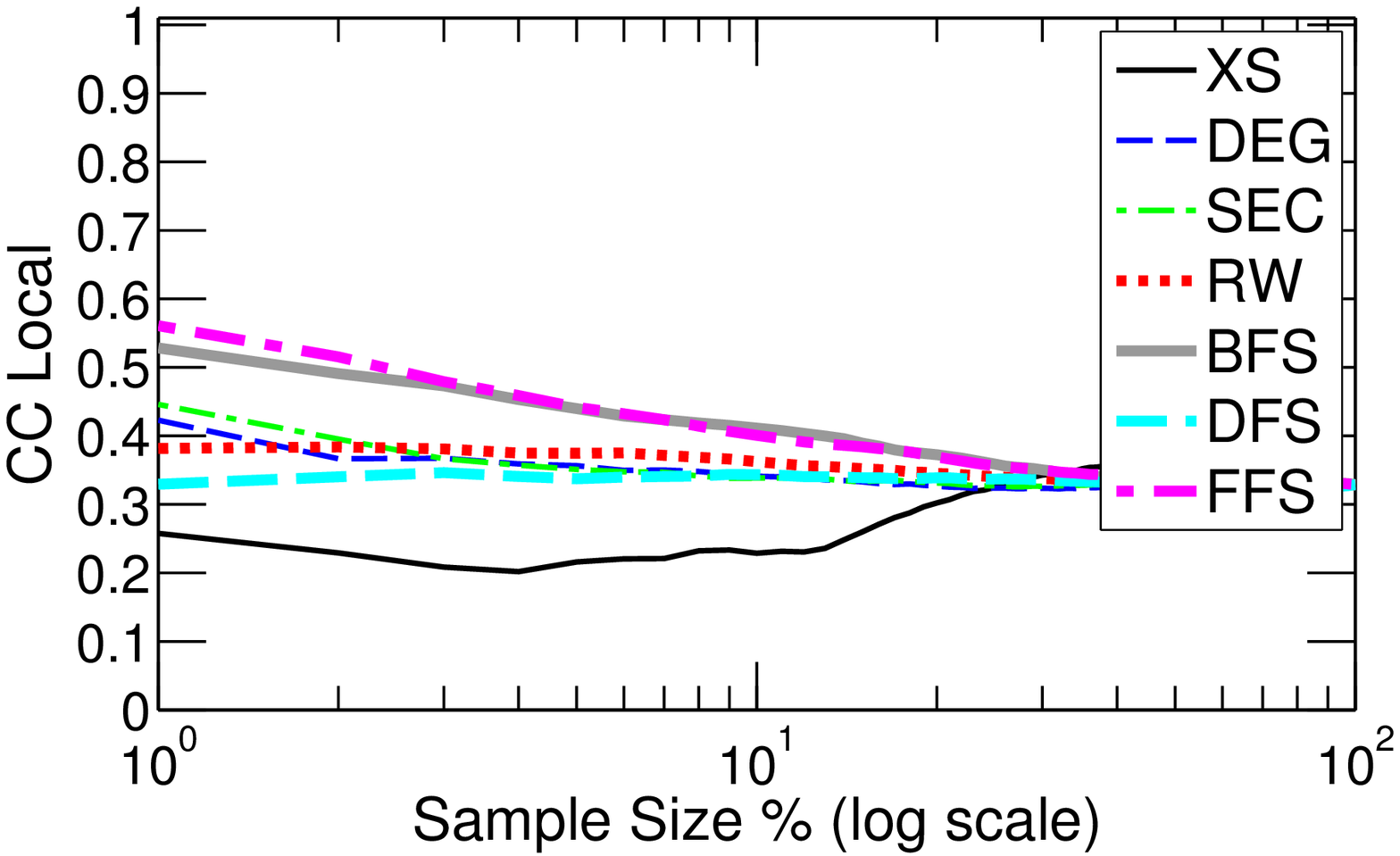}} \\ \vskip -0.01in
  \subfloat[WikiVote ({\sc CCglb})] {\label{fig:rep.ccglb.wikivote}\includegraphics[width=0.2\textwidth]{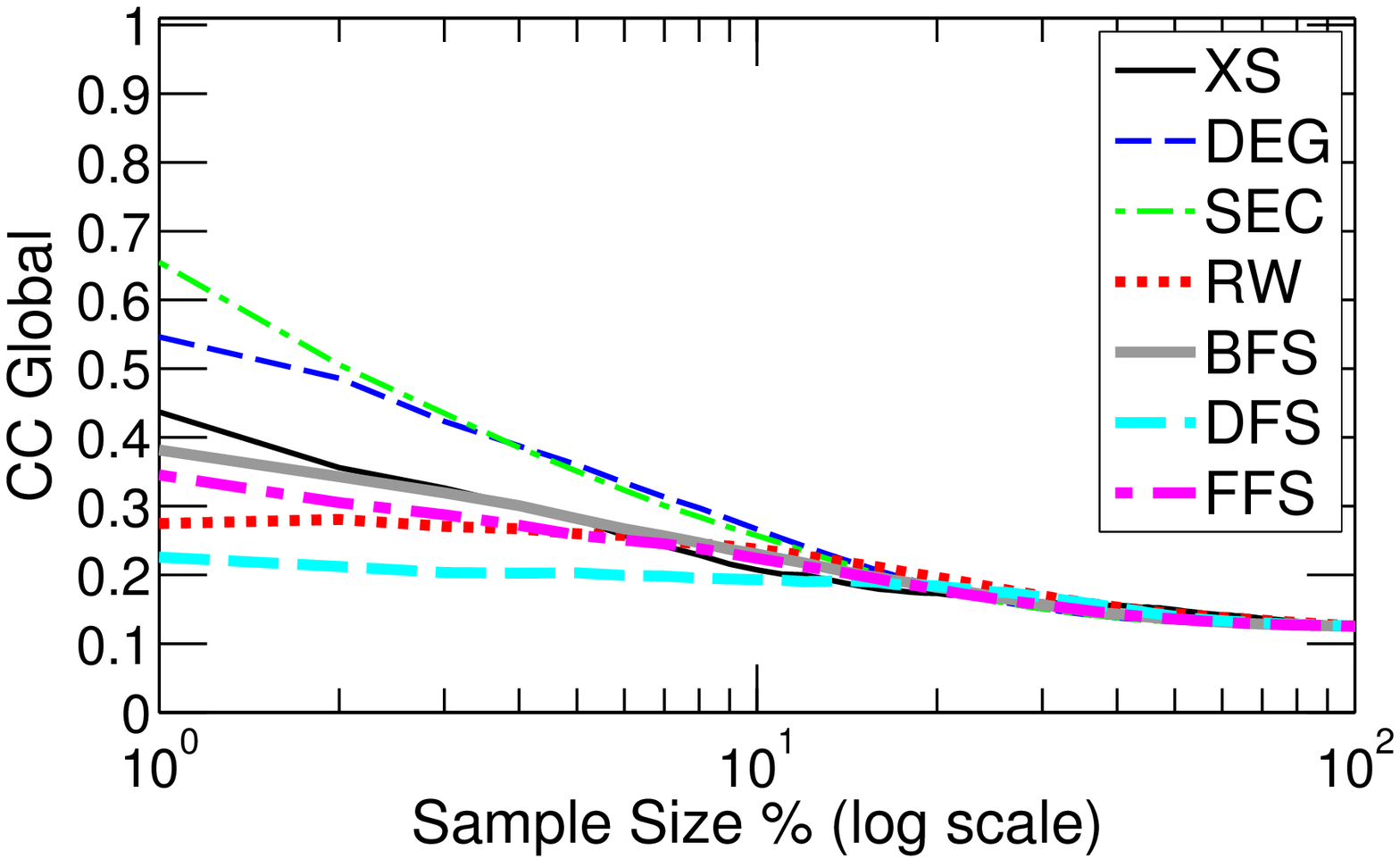}} \vspace{.05cm}
  \subfloat[HEPTh ({\sc CCglb})]{\label{fig:rep.ccglb.hepth}\includegraphics[width=0.2\textwidth]{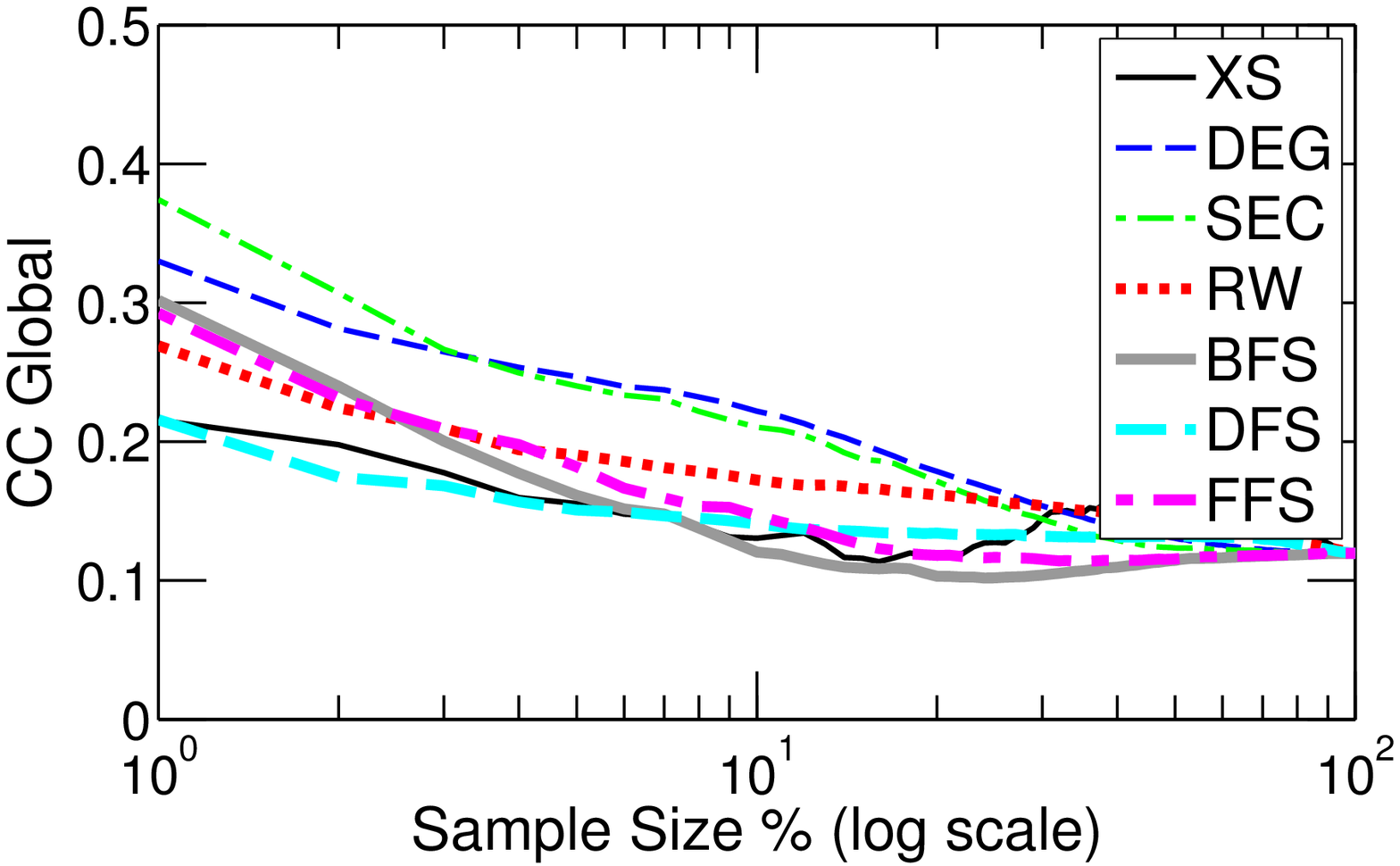}}
\caption{Evaluating {\sc CCglb} and  {\sc CCloc}.}
  \label{fig:rep.clustering}
\vskip -0.15in
\end{figure}

\subsection{Network Reach}
\label{sec:rep.reach}
We propose a new measure of representativeness called \emph{network reach}.  As a newer measure, \emph{network reach} has obviously received considerably less attention than Degree and Clustering within the existing literature, but it is, nevertheless, a vital measure for a number of important applications (as we will see in Section \ref{sec:applications}).  \emph{Network reach} captures the extent to which a sample \emph{covers} a network.  Intuitively, for a sample to be truly representative of a large network, it should consist of nodes from diverse portions of the network, as opposed to being relegated to a small ``corner'' of the graph.  This concept will be made more concrete by discussing in detail the two measures of \emph{network reach} we employ:  \emph{community reach} and the \emph{discovery quotient}.

\subsubsection{Measures}
\label{sec:rep.reach.measures}
\noindent
\textbf{Community Reach ({\sc CNM} and {\sc RAK}).}  Many real-world networks exhibit what is known as \emph{community structure}.  A \emph{community} can be loosely defined as a set of nodes more densely connected among themselves than to other nodes in the network.   Although there are many ways to represent community structure depending on various factors such as whether or not overlapping is allowed, in this work, we represent community structure as a \emph{partition}: a collection of disjoint subsets whose union is the vertex set $V$ \cite{Fortunato2010Community}.  Under this representation, each subset in the partition represents a community. The task of a community detection algorithm is to identify a partition such that vertices within the same subset in the partition are more densely connected to each other than to vertices in other subsets \cite{Fortunato2010Community}.  For the criterion of \emph{community reach}, a sample is more representative of the network if it consists of nodes from more of the communities in the network.  We measure \emph{community reach} by taking the number of communities represented in the sample and dividing by the total number of communities present in the original network.  Since a community is essentially a cluster of nodes, one might wonder why we have included \emph{community reach} as a measure of \emph{network reach}, rather than as a measure of \emph{clustering}.  The reason is that we are slightly less interested in the structural details of communities detected here.  Rather, our aim is to assess how ``spread out'' a sample is across the network.  Since community detection is somewhat of an inexact science (e.g. see \cite{Good2010Performance}), we measure \emph{community reach} with respect to two separate algorithms.  We employ both the method proposed by Clauset et al. in \cite{Clauset2004Finding} (denoted as {\sc CNM}) and the approach proposed by Raghavan et al. in \cite{Raghavan2007Near} (denoted as {\sc RAK}). Essentially, for our purposes, we are defining communities simply as the output of a community detection algorithm.

\noindent
\textbf{Discovery Quotient {\sc (DQ)}.}   An alternative view of \emph{network reach} is to measure the proportion of the network that is \emph{discovered} by a sampling strategy.  The number of nodes discovered by a strategy is defined as $|S \cup N(S)|$.  The \emph{discovery quotient} is this value normalized by the total number of nodes in a network:  $\frac{|S \cup N(S)|}{|V|}$.  Intuitively, we are defining the \emph{reach} of a sample here by measuring the extent to which it is one hop away from the rest of the network.  As we will discuss in Section \ref{sec:applications}, samples with high \emph{discovery quotients} have several important applications.  Note that a simple greedy algorithm for coverage problems such as this has a well-known sharp approximation bound of $1-1/e$ \cite{Maiya2010Expansion,Feige1998Threshold}.  However, link-trace sampling is restricted to selecting subsequent sample elements from the current neighborhood $N(S)$ at each iteration, which results in a much smaller search space.  Thus, this approximation guarantee can be shown not to hold within the context of link-trace sampling.

\subsubsection{Results}
\label{sec:rep.reach.results}
As shown in Figure \ref{fig:rep.reach}, the XS strategy displays the overwhelmingly best performance on all three measures of \emph{network reach}.  We highlight several observations here.  First, the extent to which the XS strategy outperforms all others on the {\sc RAK} and {\sc CNM} measures is quite striking.  We posit that the expansion bias of the XS strategy ``pushes'' the sampling process towards the inclusion of new communities not already seen (see also \cite{Maiya2010Sampling}).  In Section \ref{sec:biases.xs}, we will analytically examine this connection between expansion bias and \emph{community reach}.  On the other hand, the SEC method appears to be among the least effective in reaching different communities or clusters.  We attribute this to the fact that SEC preferentially selects nodes with many connections to nodes already sampled.  Such nodes are likely to be members of clusters already represented in the sample.  Second, on the {\sc DQ} measure, it is surprising that the DS strategy, which explicitly selects high degree nodes, often fails to even come close to the XS strategy.  We partly attribute this to an overlap in the neighborhoods of well-connected nodes.  By explicitly selecting nodes that contribute to \emph{expansion}, the XS strategy is able to discover a much larger proportion of the network in the same number of steps - in some cases, by actively sampling comparatively \emph{lower} degree nodes.  Finally, it is also surprising that the BFS strategy, widely used to crawl and explore online social networks (e.g \cite{Mislove2007Measurement}) and other graphs (e.g. \cite{Najork2001Breadthfirst}), performs quite dismally on all three measures.  In short, we find that nodes contributing most to the expansion of the sample are unique in that they provide specific and significant advantages over and above those provided by nodes that are simply well-connected and those accumulated through standard BFS-based crawls.  These and previously mentioned results are in contrast to the conventional wisdom followed in much of the existing literature (e.g. \cite{Najork2001Breadthfirst,Mislove2007Measurement,Adamic2001Search,Cohen2003Efficient,Kurant2010Bias}).

\begin{figure}[htb]
  \centering
  \subfloat[HEPTh {\sc (RAK)} ] {\label{fig:rep.hepth.commlp}\includegraphics[width=0.2\textwidth]{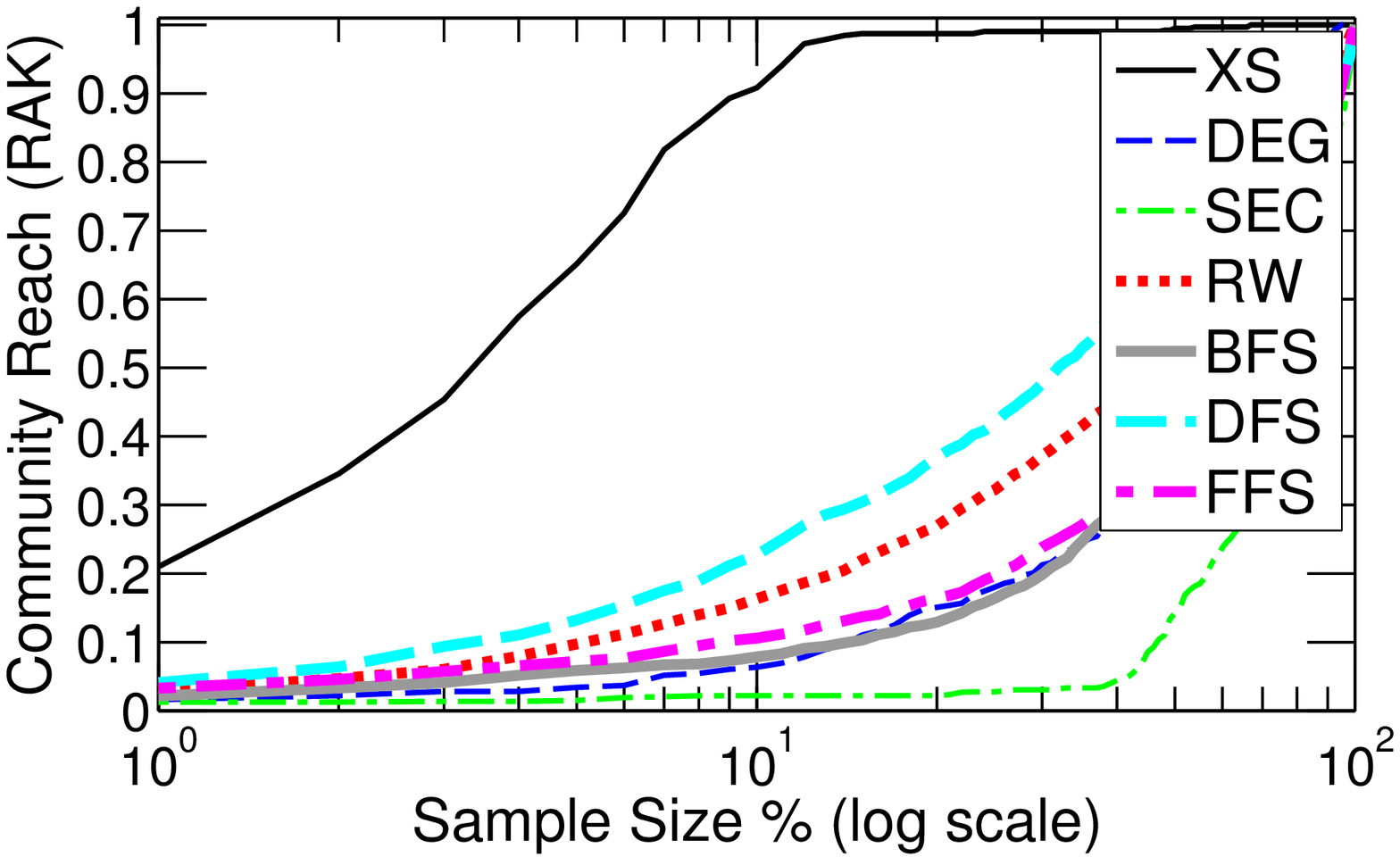}} \vspace{.05cm}
  \subfloat[Amazon {\sc (RAK)}]{\label{fig:rep.amazon.commlp}\includegraphics[width=0.2\textwidth]{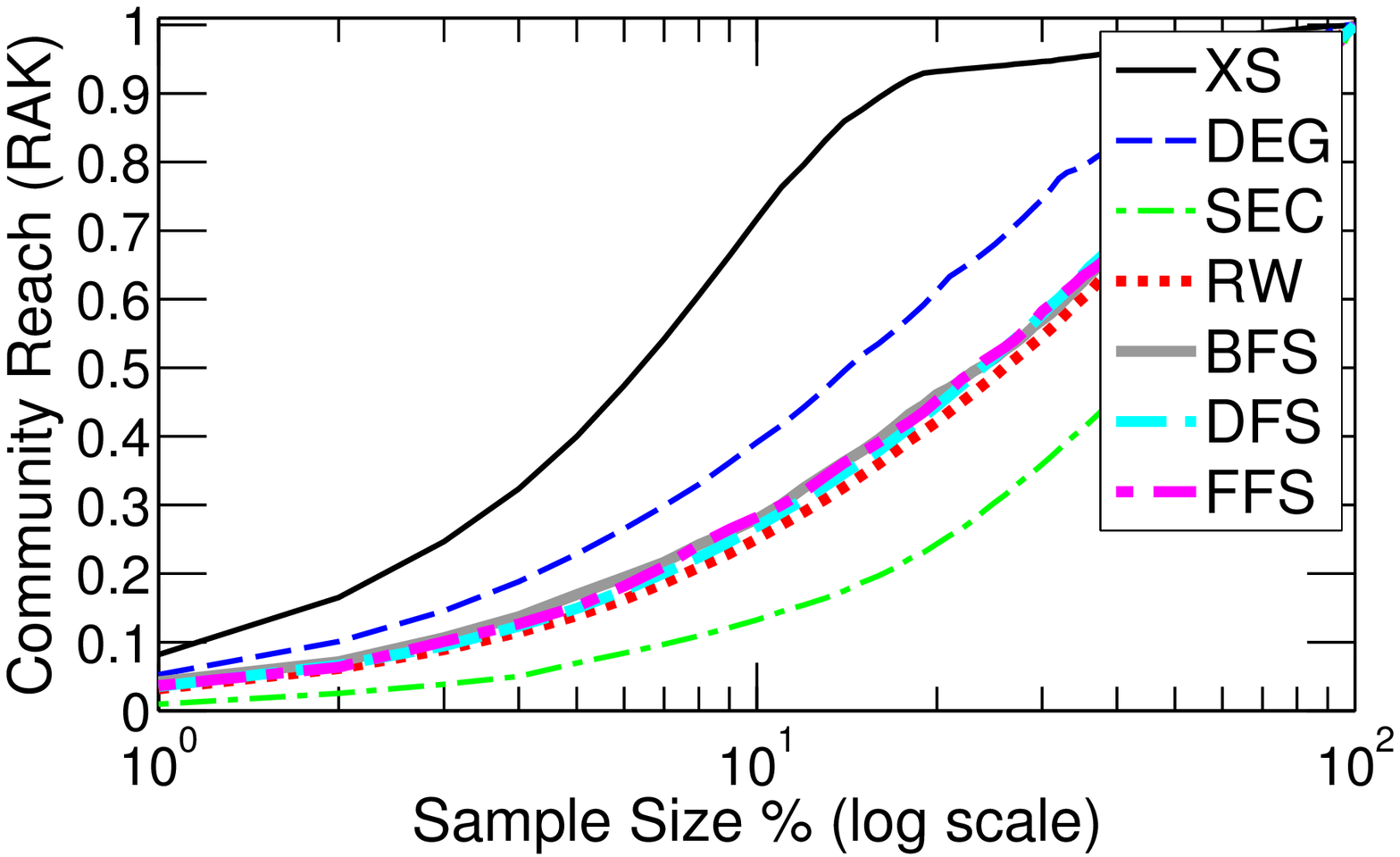}} \\ \vskip -0.01in
  \subfloat[HEPTh {\sc (CNM)}] {\label{fig:rep.hepth.commfg}\includegraphics[width=0.2\textwidth]{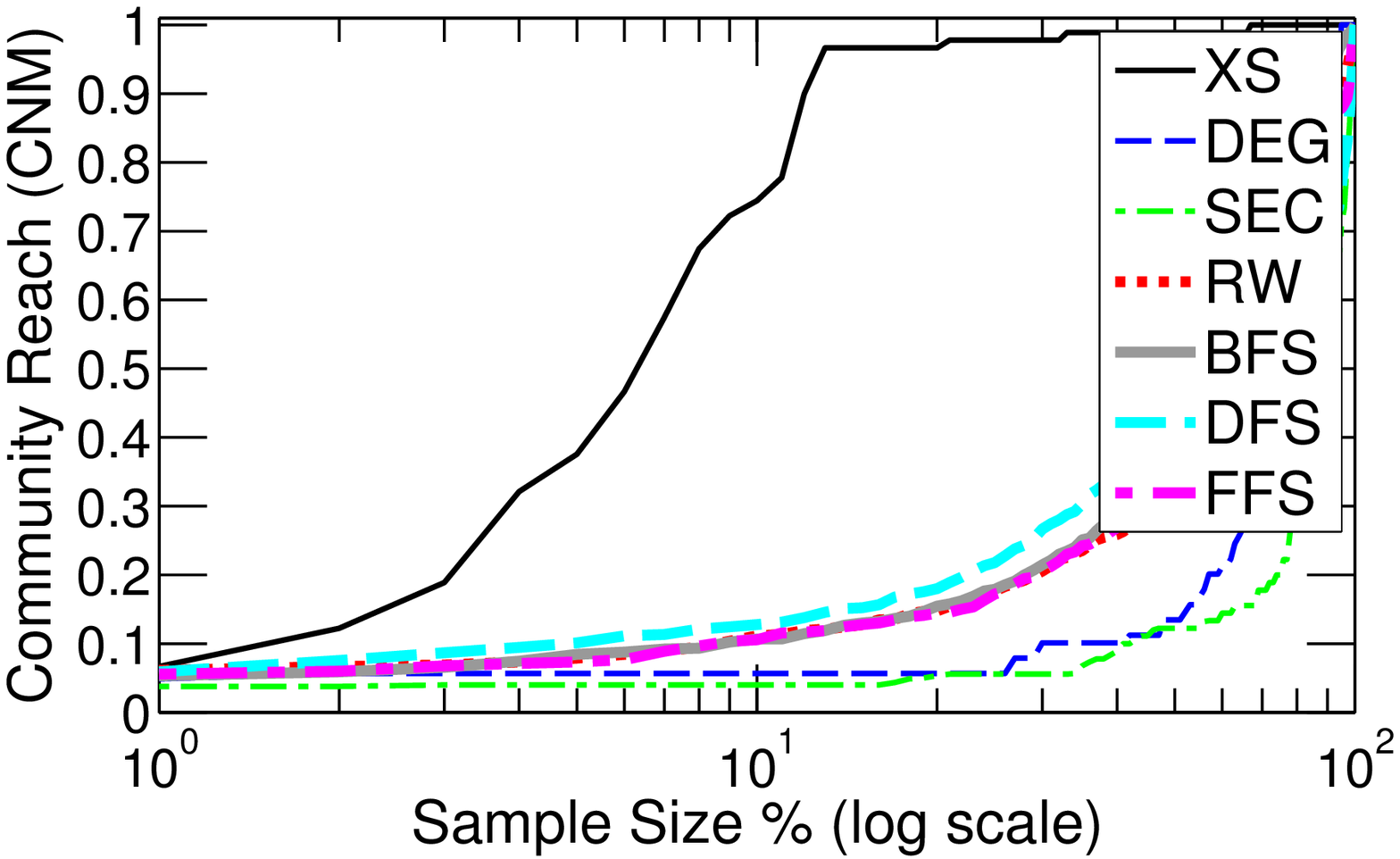}} \vspace{.05cm}
  \subfloat[Amazon {\sc (CNM)} ]{\label{fig:rep.amazon.commfg}\includegraphics[width=0.2\textwidth]{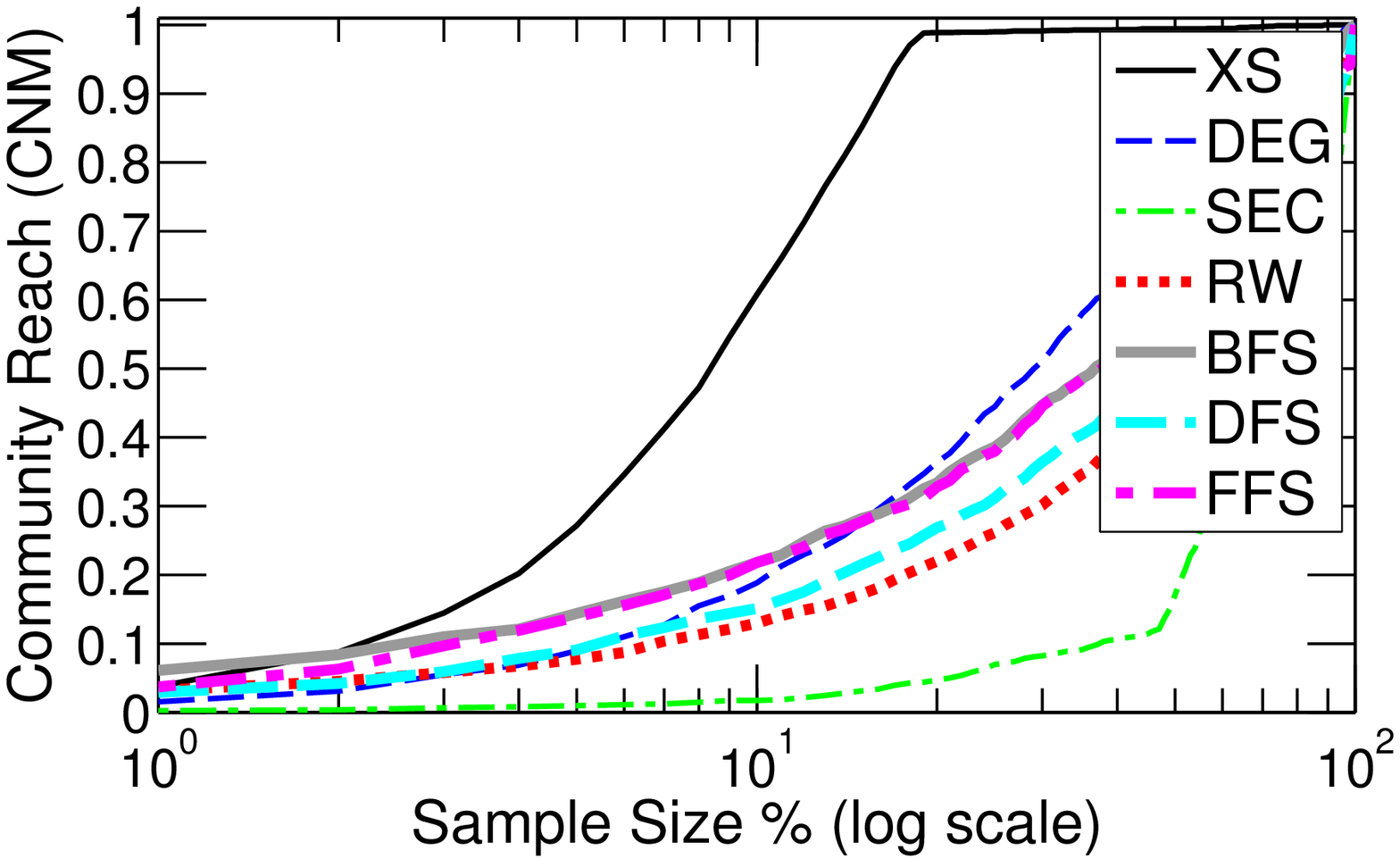}} \\ \vskip -0.01in
  \subfloat[HEPTh {\sc (DQ)}] {\label{fig:rep.hepth.dq}\includegraphics[width=0.2\textwidth]{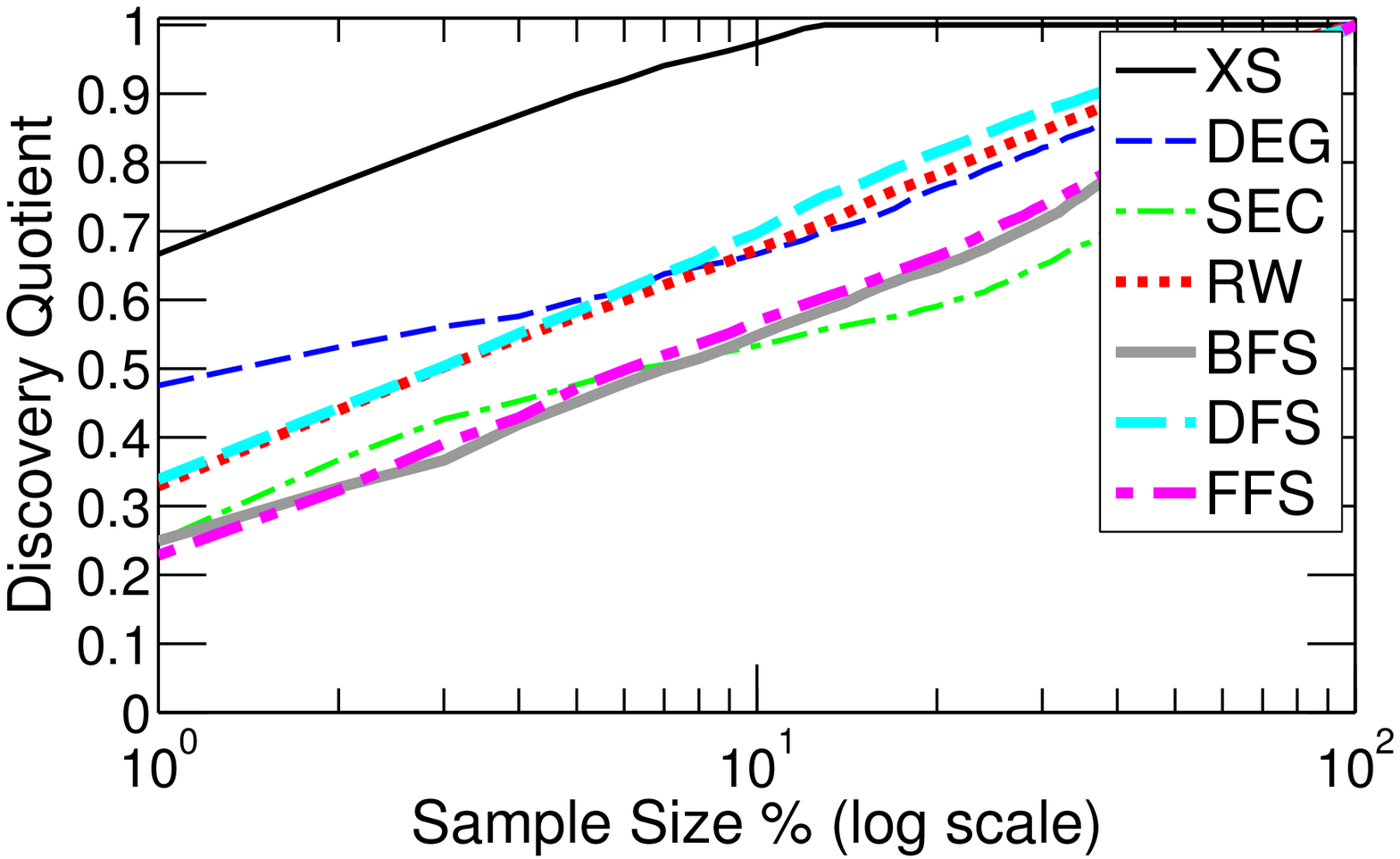}} \vspace{.05cm}
  \subfloat[Amazon {\sc (DQ)}]{\label{fig:rep.amazon.dq}\includegraphics[width=0.2\textwidth]{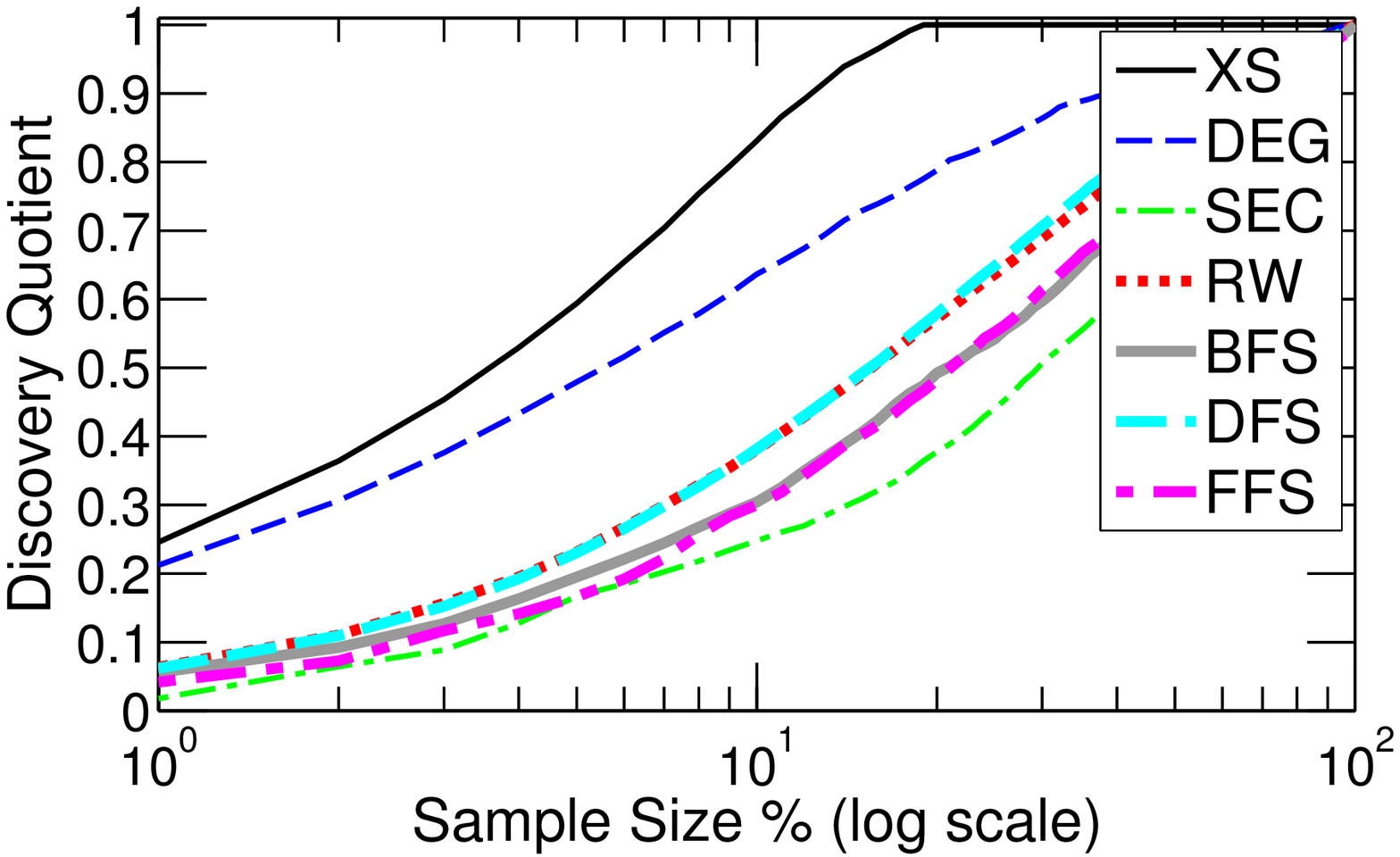}} 
\caption{ {\footnotesize Evaluating \emph{network reach}.  Results for remaining networks are similar with XS exhibiting superior performance on all three criteria.} }
  \label{fig:rep.reach}
\vskip -0.15in
\end{figure}

\subsection{A Note on Seed Sensitivity}
\label{sec:rep.seedsensitivity}
As described, link-trace sampling methods are initiated from randomly selected seeds. This begs the question:  How sensitive are these results to the seed supplied to a strategy?  Figure \ref{fig:std} shows the standard deviation of each sampling strategy for both \emph{hub inclusion} and \emph{network reach} as sample size grows.  We generally find that methods with the most explicit biases (XS, SEC, DS) tend to exhibit the least seed sensitivity and variability, while the remaining methods (BFS, DFS, FFS, RW) exhibit the most.  This trend is exhibited across all measures and all datasets.

\begin{figure}[htb]
  \centering
  \subfloat[Epinions ({\sc DQ})] {\label{fig:dq_std.epinions}\includegraphics[width=0.2\textwidth]{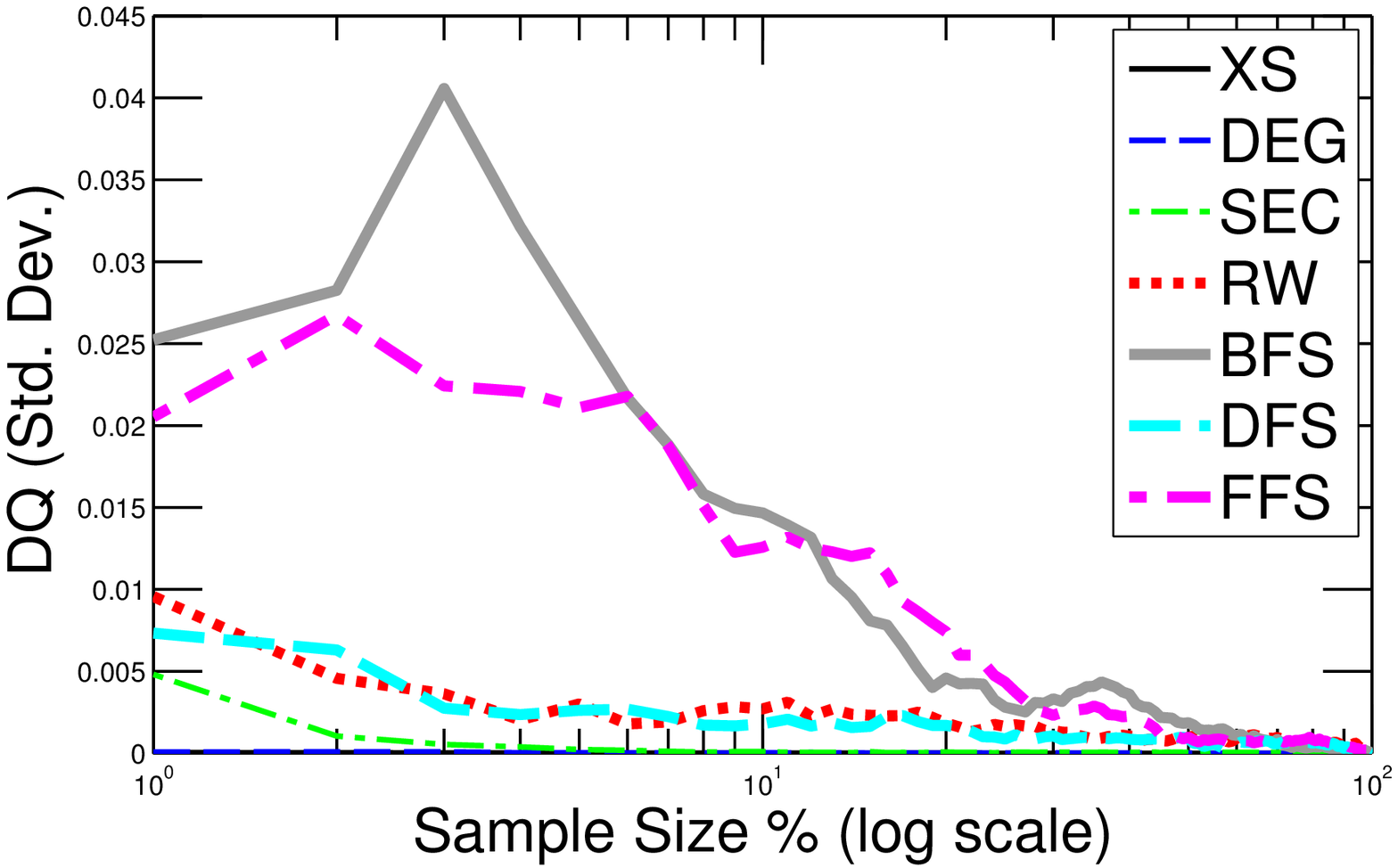}} \vspace{.05cm}
  \subfloat[Epinions ({\sc Hubs})]{\label{fig:std.hubs_std.epinions}\includegraphics[width=0.2\textwidth]{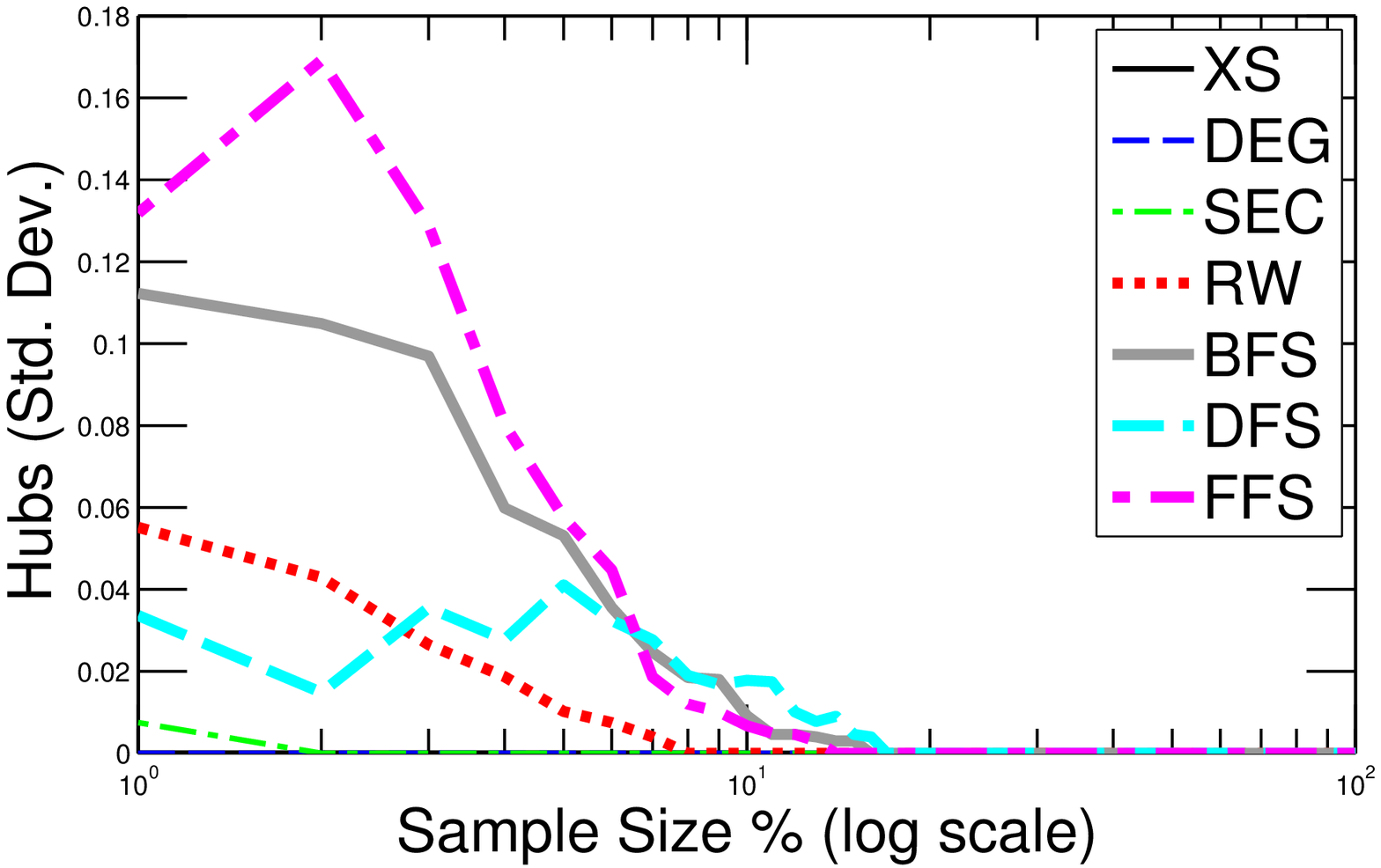}}
\caption{Standard deviation for {\sc DQ} and {\sc Hubs} on Epinions network.  Results are similar for remaining networks.}
  \label{fig:std}
\end{figure}

\section{Analyzing Sampling Biases}
\label{sec:biases}
Let us briefly summarize two main observations from Section \ref{sec:rep}.  We saw that the XS strategy dramatically outperformed all others in accumulating nodes from many different communities.  We also saw that the SEC strategy was often a reasonably good approximation to directly sampling high degree nodes and locates the set of most well-connected nodes significantly faster than most other methods.  Here, we turn our attention to analytically examining these observed connections.  We begin by briefly summarizing some existing analytical results.

\subsection{Existing Analytical Results}
\label{sec:biases.existing}
\noindent
\textbf{Random Walks (RW).}  There is a fairly large body of research on random walks and Markov chains (see \cite{Lovasz1994Random} for an excellent survey).  A well-known analytical result states that the probability (or \emph{stationary} probability) of residing at any node $v$ during a random walk on a connected, undirected graph converges with time to $\frac{d_{v}}{2\cdot|E|}$, where $d_{v}$ is the degree of node $v$ \cite{Lovasz1994Random}.  In fact, the \emph{hitting time} of a random walk (i.e. the expected number of steps required to reach a node beginning from any node) has been analytically shown to be directly related to this stationary probability \cite{Hopcroft2007Manipulationresistant}.  Random walks, then, are naturally biased towards high degree (and high PageRank) nodes, which provides some theoretical explanation as to why RW performs slightly better than other strategies (e.g. BFS) on measures such as \emph{hub inclusion}.  However, as shown in Figure \ref{fig:rep.degree}, it is nowhere near the best performers.  Thus, these analytical results appear only to hold in the limit and fail to predict actual sampling performance.

\noindent
\textbf{Degree Sampling (DS).}  In studying the problem of searching peer-to-peer networks, Adamic et al. \cite{Adamic2001Search} proposed and analyzed a greedy search strategy very similar to the DS sampling method. This strategy, which we refer to as a degree-based walk, was analytically shown to quickly find the highest-degree nodes and quickly cover large portions of scale-free networks.  Thus, these results provide a theoretical explanation for performance of the DS strategy on measures such as \emph{hub inclusion} and the \emph{discovery quotient}.

\noindent
\textbf{Other Results.}  As mentioned in Section \ref{sec:relatedwork}, to the best of our knowledge, much of the other analytical results on sampling bias focus on \emph{negative} results \cite{Lakhina2003Sampling,Costenbader2003Stability,Kurant2010Bias,Achlioptas2005Bias,Stumpf2005Subnets}.  Thus, these works, although intriguing, may not provide much help in the way of explaining \emph{positive} results shown in Section \ref{sec:rep}.  

~\\
We now analyze two methods for which there are little or no existing analytical results:  XS and SEC.

\subsection{Analyzing XS Bias}
\label{sec:biases.xs}
A widely used measure for the ``goodness'' or the strength of a community in graph clustering and community detection is \emph{conductance} \cite{kannan2004clusterings}, which is a function of the fraction of total edges emanating from a sample (lower values mean stronger communities):  $$\varphi(S) = \frac{\sum_{i \in S,j \in \overline{S}} a_{ij}} {\min(a(S), a(\overline{S}))}$$ where $a_{ij}$ are entries of the adjacency matrix representing the graph and $a(S) = \sum_{i \in S} \sum_{j \in V}a_{ij}$, which is the total number of edges incident to the node set $S$.  

It can be shown that, provided the conductance of communities is sufficiently low, sample expansion is directly affected by community structure.  Consider a  simple random graph model with vertex set $V$ and a community structure represented by partition $C=\{C_{1},\ldots,C_{|C|}\}$ where $C_{1} \cup \ldots \cup C_{|C|}=V$.  Let $e_{in}$ and $e_{out}$ be the number of each node's edges pointing within and outside the node's community, respectively.  These edges are connected uniformly at random to nodes either within or outside a node's community, similar to a configuration model (e.g., \cite{Chung2002Connected}).  Note that both $e_{in}$ and $e_{out}$ are related directly to conductance.  When conductance is lower, $e_{out}$ is smaller\footnote{Suppose conductance of a vertex set $Y$ is $\varphi(Y)$, the total number of edges incident to $Y$ is $e$, and $e_{in}$ and $e_{out}$ are random variables denoting the inward and outward edges, respectively, of each node (as opposed to constant values).  Then, $\mathbb{E}(e_{out}) = \frac{e\varphi(Y)}{|Y|}$ and $\mathbb{E}(e_{in})= \frac{2e(1-\varphi(Y))}{|Y|}$.  If $\varphi(Y) < \frac{2}{3}$, then $\mathbb{E}(e_{out}) < \mathbb{E}(e_{in})$.  {\scriptsize (In this example, the expectations are over nodes in $Y$ only.)}}  as compared to $e_{in}$.  The following theorem expresses the link between expansion and \emph{community reach} in terms of these inward and outward edges.  

\begin{thm}
\label{thm:xsbias}
Let $S$ be the current sample, $v$ be a new node to be added to $S$, and $n$ be the size of $v$'s community.  If $e_{out} < \frac{|V|(e_{in})^2}{n(|V|+e_{in}|S|)}$, then the expected expansion of $S \cup \{v\}$ is higher when $v$ is in a new community than when $v$ is in a current community.
\end{thm}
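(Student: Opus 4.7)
The plan is to reduce the comparison of expected expansions to a comparison of the number of genuinely new nodes contributed by $v$, and then to analyze that count edge by edge under the configuration-model assumptions. Since $|S \cup \{v\}| = |S|+1$ in both cases, the comparison of expected expansion reduces to comparing $\mathbb{E}\bigl[|N(S \cup \{v\})|\bigr]$. Writing $N(S \cup \{v\}) = (N(S)\setminus\{v\}) \cup \bigl(N(\{v\}) \setminus (S \cup N(S))\bigr)$ and using that $v \in N(S)$ under link-trace sampling, this reduces further to comparing $\mathbb{E}[X]$, where $X := |N(\{v\}) \setminus (S \cup N(S))|$ is the number of genuinely new neighbors $v$ brings in. Splitting $v$'s incident edges by whether they are in-community ($e_{in}$ of them) or out-community ($e_{out}$ of them) and invoking linearity, one has $\mathbb{E}[X] = e_{in}\cdot p_{in} + e_{out}\cdot p_{out}$, where $p_{in}$ (resp.\ $p_{out}$) is the probability under the random-graph model that a uniformly random in-community (resp.\ out-community) endpoint of one of $v$'s edges lies outside $S \cup N(S)$.

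The core of the argument is computing $p_{in}$ and $p_{out}$ in each case. For the in-community term, the two cases diverge sharply. In the new-community case, no node of $S$ sits in $v$'s community, so the only in-community neighbors of $v$ that can already belong to $S \cup N(S)$ are those reached by out-community edges emanating from $S$; under the configuration model, the expected number of such nodes in the community is of order $|S|\,e_{out}\,n/|V|$. In the current-community case, at least one node $u_0 \in S$ lies in $v$'s community, and $u_0$ together with its $e_{in}$ in-community neighbors already populates $\Theta(e_{in})$ of the $n-1$ community slots, giving a much larger obstruction. For the out-community term, the two cases produce very similar probabilities, with a mild correction of order $(1+e_{in})/|V|$ reflecting how much of $|S \cup N(S)|$ sits inside vs.\ outside $v$'s community. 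Collecting these estimates, the leading-order value of $\mathbb{E}[X_{\text{new}}] - \mathbb{E}[X_{\text{cur}}]$ is of the form $e_{in}^2/n$ minus a term proportional to $e_{out}\cdot(|V| + e_{in}|S|)/|V|^2$, and imposing positivity rearranges to the stated bound $e_{out} < |V|e_{in}^2 / \bigl(n(|V| + e_{in}|S|)\bigr)$.

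The main obstacle in executing this plan is the bookkeeping inside the configuration model: avoiding double-counting when multiple edges of $S$ land on the same target node, handling the conditioning on the event $v \in N(S)$ (which uses up one of $v$'s half-edges before the computation begins), and correctly tracking the small asymmetry in $p_{out}$ so that it contributes the $e_{in}|S|/|V|$ term in the denominator rather than vanishing at leading order. A secondary nuisance is that the community of size $n$ plays a different role on the two sides of the comparison, so one must normalize the out-of-community estimates against a common baseline to ensure that only the inward difference (the sharp one) survives to drive the final threshold.
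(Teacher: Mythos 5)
Your proposal follows essentially the same route as the paper's proof: reduce the comparison of expected expansion to the expected number of genuinely new neighbors $|N(\{v\})\setminus(S\cup N(S))|$, split $v$'s edges into $e_{in}$ in-community and $e_{out}$ out-community stubs, and use linearity of expectation to estimate the overlap with $S\cup N(S)$ in each case (the paper simply takes the crude bounds that all $e_{out}$ out-neighbors count as new in the current-community case and none do in the new-community case, rather than your more refined $p_{out}$ corrections). One small slip: your collected difference should read $\tfrac{e_{in}^2}{n}-e_{out}\cdot\tfrac{|V|+e_{in}|S|}{|V|}$, not a term proportional to $e_{out}(|V|+e_{in}|S|)/|V|^2$, which as written would not rearrange to the stated threshold.
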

\begin{proof}
Let $X_{new}$ be the expected value for $|N(\{v\})-N(S) \cup S|$ when $v$ is in a new community and let $X_{curr}$ be the expected value when not.  We compute an upper bound on $X_{curr}$ and a lower bound on $X_{new}$. 

~\\
\noindent
Deriving $X_{curr}$:  Assume $v$ is affiliated with a current community already represented by at least one node in $S$.  Since we are computing an upper bound on $X_{curr}$, we assume there is exactly one node from $S$ within $v$'s community, as this is the minimum for $v$'s community to be a \emph{current} community. By the linearity of expectations, the upper bound on $X_{curr}$ is $e_{out} + \frac{(n-e_{in})e_{in}}{n}$, where the term $\frac{(n-e_{in})e_{in}}{n}$ is the expected number of nodes in $v$'s community that are both linked to $v$ \emph{and} in the set $V-(N(S) \cup S)$.  

~\\
\noindent
Deriving $X_{new}$:  Assume $v$ belongs to a new community not already represented in $S$.  (By definition, no nodes in $S$ will be in $v$'s community.)  Applying the linearity of expectations once again, the lower bound on $X_{new}$ is $e_{in} - e_{out}|S|\frac{e_{in}}{|V|}$, where the term $e_{out}|S|\frac{e_{in}}{|V|}$ is the expected number of nodes in $v$'s community that are both linked to $v$ \emph{and} already in $N(S)$. 

~\\
Solving for $e_{out}$, if $e_{out} < \frac{|V|(e_{in})^2}{n(|V|+e_{in}|S|)}$, then $X_{new} > X_{curr}$. 
\end{proof}

Theorem \ref{thm:xsbias} shows analytically the link between expansion and community structure - a connection that, until now, has only been empirically demonstrated \cite{Maiya2010Sampling}. Thus, a theoretical basis for performance of the XS strategy on \emph{community reach} is revealed.

\subsection{Analyzing SEC Bias}
\label{sec:biases.sec}
Recall that the SEC method uses the degree of a node $v$ in the induced subgraph $G_{S \cup \{v\}}$ as an estimation for the degree of $v$ in $G$.  In Section \ref{sec:rep}, we saw that this choice performs quite well in practice.  Here, we provide theoretical justification for the SEC heuristic.  Consider a random network $G$ with some arbitrary expected degree sequence (e.g. a power law random graph under the so-called $G(\mathbf{w})$ model \cite{Chung2002Connected}) and a sample $S \subset V$.  Let $d(\cdot, \cdot)$ be a function that returns the expected degree of a given node in a given random network (see \cite{Chung2002Connected} for more information on \emph{expected} degree sequences).  Then, it is fairly straightforward to show the following holds.

\begin{prop}
\label{prop:secbias}
For any two nodes $v,w \in N(S)$, \\if $d(v, G) \geq d(w, G)$, then $d(v, G_{S \cup \{v\}}) \geq d(w, G_{S \cup \{w\}})$.
\end{prop}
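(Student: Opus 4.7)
The plan is to unpack the meaning of $d(\cdot,\cdot)$ under the $G(\mathbf{w})$ model of Chung and Lu, in which each potential edge $(i,j)$ appears independently with probability $p_{ij}=w_{i}w_{j}/\rho$, where $w_{i}$ is the expected degree assigned to node $i$ and $\rho=\sum_{k}w_{k}$. Under this model the expected degree of a node $u$ in an arbitrary induced subgraph $G_{U}$ is, by linearity of expectation, $\sum_{x\in U,\,x\neq u}p_{ux}=(w_{u}/\rho)\sum_{x\in U,\,x\neq u}w_{x}$.

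First I would apply this formula to the ambient graph to confirm that $d(u,G)=w_{u}(\rho-w_{u})/\rho$, which is monotone in $w_{u}$, so the hypothesis $d(v,G)\ge d(w,G)$ is equivalent (for reasonable weight ranges) to $w_{v}\ge w_{w}$. Next, I would apply the same formula to the two subgraphs of interest. Since $v,w\in N(S)\subseteq V-S$, neither $v$ nor $w$ belongs to $S$, so the degree of $v$ in $G_{S\cup\{v\}}$ counts only edges from $v$ to vertices of $S$; hence
\[
d(v,G_{S\cup\{v\}})=\frac{w_{v}}{\rho}\sum_{u\in S}w_{u}=\frac{w_{v}\,W_{S}}{\rho},
\]
and by an identical calculation $d(w,G_{S\cup\{w\}})=w_{w}W_{S}/\rho$, where $W_{S}:=\sum_{u\in S}w_{u}$ depends only on $S$.

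The proposition then follows immediately: the common factor $W_{S}/\rho$ is nonnegative and independent of the node being added, so $w_{v}\ge w_{w}$ implies $w_{v}W_{S}/\rho\ge w_{w}W_{S}/\rho$, i.e.\ $d(v,G_{S\cup\{v\}})\ge d(w,G_{S\cup\{w\}})$. The main obstacle, which is really the only conceptual subtlety, is making sure the two induced subgraphs yield the \emph{same} multiplicative ``profile of $S$'' function; this works precisely because both $v$ and $w$ lie outside $S$, so adding either one leaves the set of potential endpoints for the new node identical to $S$ itself. A minor secondary point to note is that the argument uses the expected degree under the random model rather than the realized degree of a specific instance, which is exactly the quantity $d(\cdot,\cdot)$ is defined to return, so no concentration bound is needed at the level of the proposition.
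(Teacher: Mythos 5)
Your proposal is correct and follows essentially the same route as the paper: both compute the expected degree of the newly added node in the induced subgraph as the sum of Chung--Lu edge probabilities to the members of $S$, observe that this equals the node's weight times a common factor depending only on $S$, and conclude by monotonicity. The only cosmetic difference is that you work with the underlying weights $w_i$ and separately note the monotone relation between $w_u$ and $d(u,G)$, whereas the paper identifies the weights with the expected degrees $d(\cdot,G)$ from the outset.
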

\begin{proof}
The probability of an edge between any two nodes $i$ and $j$ in G is $\frac{d(i, G) \cdot d(j, G)}{\Delta}$ where $\Delta = \sum_{m \in V}d(m,G)$.  Let $\delta = d(v, G_{S \cup \{v\}}) - d(w, G_{S \cup \{w\}})$.  Then, 
\begin{align}
 \delta &= \sum_{x \in S}\frac{d(x, G) \cdot d(v,G)}{\Delta}  - \sum_{x \in S}\frac{d(x, G) \cdot d(w,G)}{\Delta}  \\
					    ~ &=  (d(v,G) - d(w,G))  \sum_{x \in S}\frac{d(x, G)}{\Delta} 
\end{align}
Since $\delta \geq 0$ only when $d(v,G) \geq d(w,G)$, the proposition holds.
\end{proof}

Combining Proposition \ref{prop:secbias} with analytical results from \cite{Adamic2001Search} (described in Section \ref{sec:biases.existing}) provides a theoretical basis for observed performance of the SEC strategy on measures such as \emph{hub inclusion}.  Finally, recall from Section \ref{sec:rep.degree.results} that the extent to which SEC matched the performance of DS on {\sc Hubs} seemed to partly depend on the tail of degree distributions.  Proposition \ref{prop:secbias} also yields insights into this phenomenon.  Longer and denser tails allow for more ``slack'' when deviating from these expectations of random variables (as in real-world link patterns that are not purely random).

\section{Applications for Our Findings}
\label{sec:applications}
We now briefly describe ways in which some of our findings may be exploited in important, real-world applications.  Although numerous potential applications exist, we focus here on three areas:  1) Outbreak Detection 2) Landmarks and Graph Exploration 3) Marketing.

\subsection{Practical Outbreak Detection}
What is the most effective and efficient way to predict and prevent a disease outbreak in a social network?  In a recent paper, Christakis and Fowler studied outbreak detection of the H1N1 flu among college students at Harvard University \cite{Christakis2010Social}.  Previous research has shown that well-connected (i.e. high degree) people in a network catch infectious diseases earlier than those with fewer connections \cite{Cohen2003Efficient,Zuckerman2001What,Feld1991Why}.  Thus, \emph{monitoring} these individuals allows forecasting the progression of the disease (a boon to public health officials) and  \emph{immunizing} these well-connected individuals (when immunization is possible) can prevent or slow further spread.  Unfortunately, identifying well-connected individuals in a population is non-trivial, as access to their friendships and connections is typically not fully available.  And, collecting this information is time-consuming, prohibitively expensive, and often impossible for large networks.  Matters are made worse when realizing that most existing network-based techniques for immunization selection and outbreak detection assume full knowledge of the global network structure (e.g. \cite{Tong2010Vulnerability,Leskovec2007Costeffective}).  This, then, presents a prime opportunity to exploit the power of \emph{sampling}.

To identify well-connected students and predict the outbreak, Christakis and Fowler \cite{Christakis2010Social} employed a sampling technique called \emph{acquaintance sampling} (ACQ) based on the so-called friendship paradox \cite{Zuckerman2001What,Cohen2003Efficient,Christakis2010Social}.  The idea is that random neighbors of randomly selected nodes in a network will tend to be highly-connected \cite{Cohen2003Efficient,Zuckerman2001What,Feld1991Why}.  Christakis and Fowler \cite{Christakis2010Social}, therefore, sampled random friends of randomly selected students with the objective of constructing a sample of highly-connected individuals.  Based on our aforementioned results, we ask:  Can we do better than this ACQ strategy?  In previous sections, we showed empirically and analytically that the SEC method performs exceedingly well in accumulating hubs.  (It also happens to require less information than DS and XS, the other top performers.)  Figure \ref{fig:outdet} shows the sample size required to locate the top-ranked well-connected individuals for both SEC and ACQ.  The performance differential is quite remarkable, with the SEC method faring overwhelmingly better in quickly zeroing in on the set of most well-connected nodes.  Aside from its superior performance, SEC has one additional advantage over the ACQ method employed by Christakis and Fowler.  The ACQ method assumes that nodes in $V$ can be selected uniformly at random.  It is, in fact, dependent on this \cite{Cohen2003Efficient}.  (ACQ, then, is \emph{not} a link-trace sampling method.)  By contrast, SEC, as a pure link-trace sampling strategy, has no such requirement and, thus, can be applied in realistic scenarios for which ACQ is unworkable.
\begin{figure}[ht]
  \centering
  \subfloat[Slashdot] {\label{fig:outdet.slashdot}\includegraphics[width=0.2\textwidth]{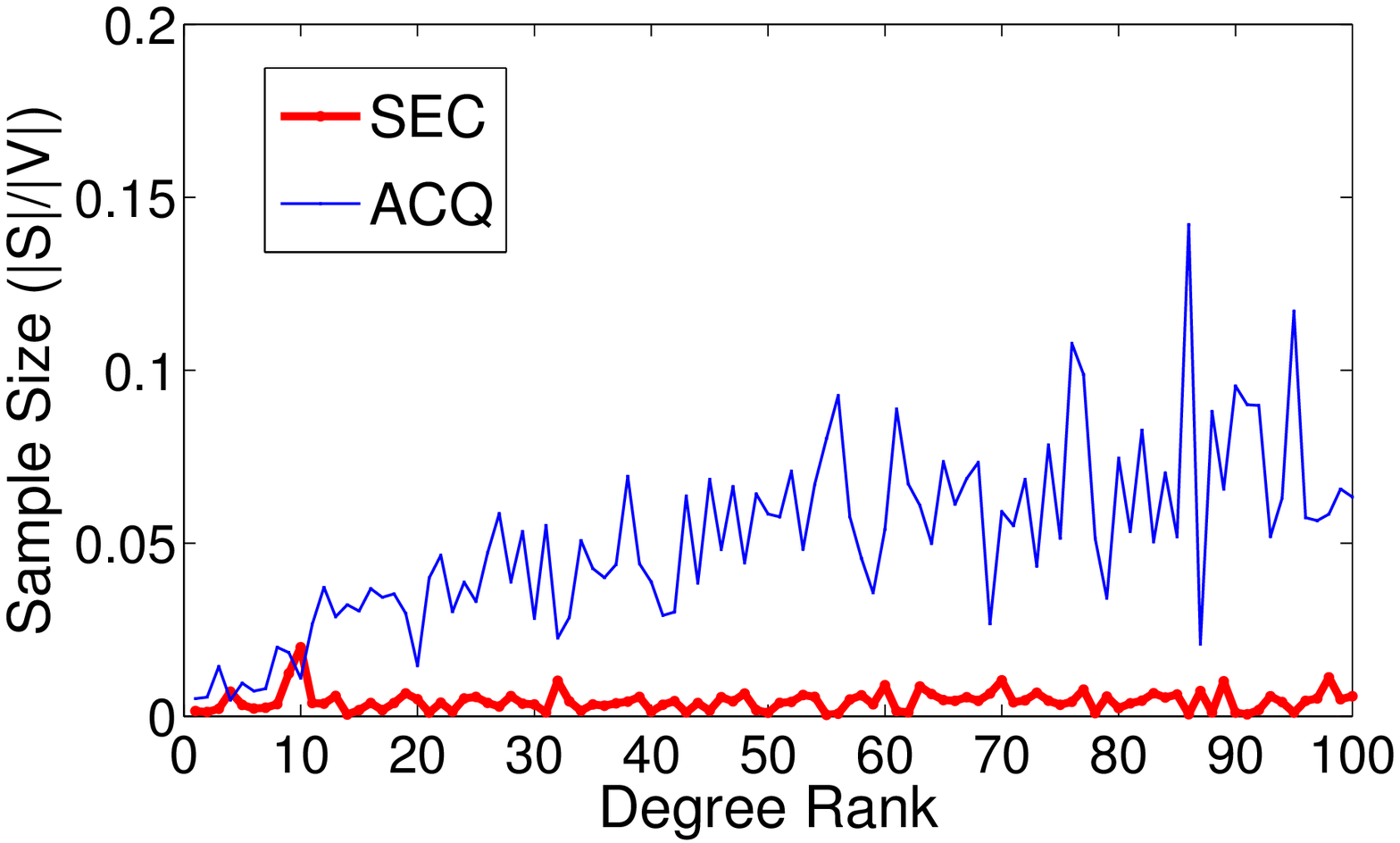}} \vspace{.05cm}
  \subfloat[CondMat]{\label{fig:outdet.condmat}\includegraphics[width=0.2\textwidth]{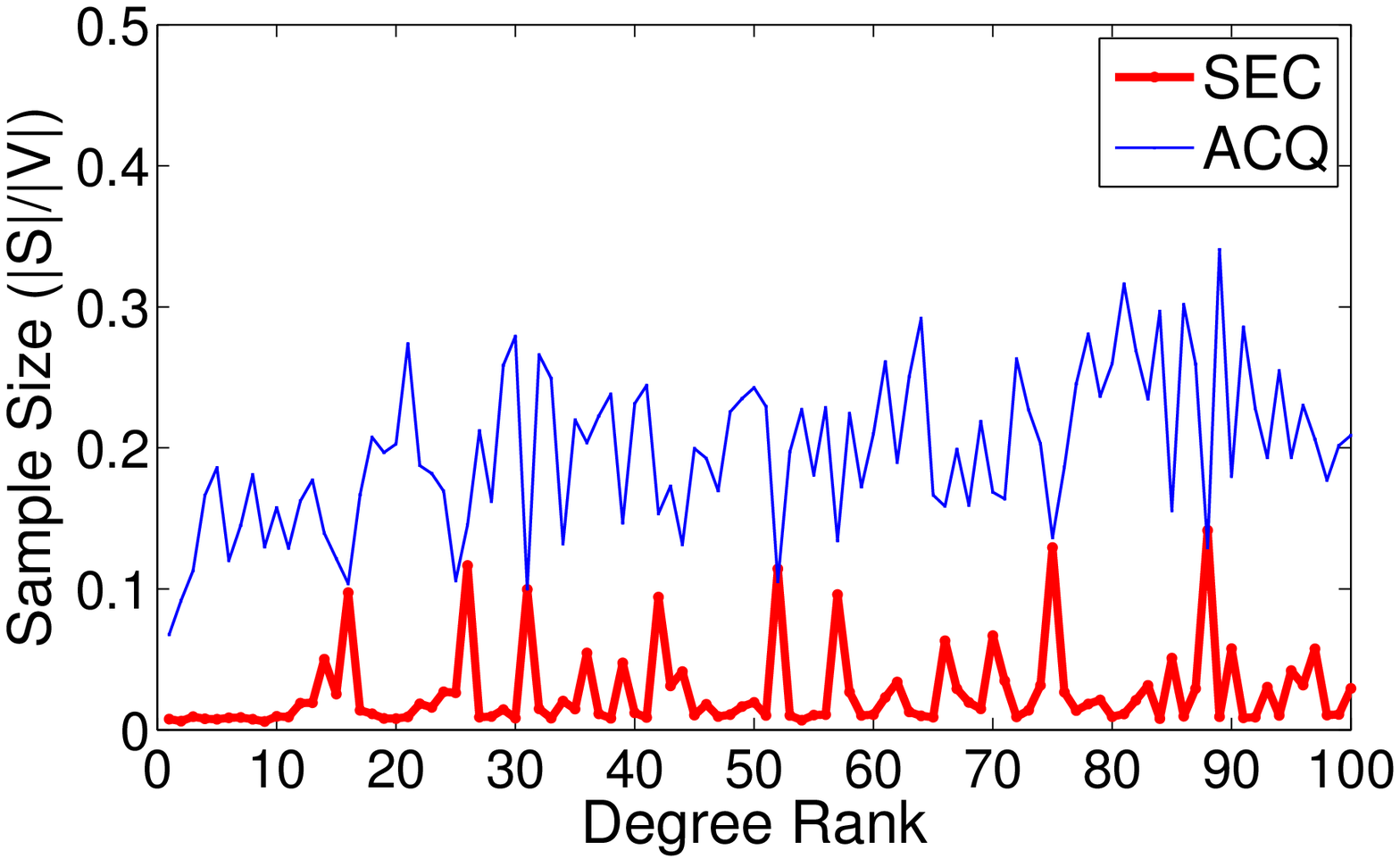}}
\caption{Comparison of SEC and ACQ on quickly locating well-connected individuals (lower is better).  SEC far surpasses ACQ.  Results are similar for every network.}
  \label{fig:outdet}
\vskip -0.15in
\end{figure}

\subsection{Marketing}
Recall from Section \ref{sec:rep.reach} that a community in a network is a cluster of nodes more densely connected among themselves than to others.  Identifying communities is important, as they often correspond to real social groups, functional groups, or similarity (both demographic and not) \cite{Fortunato2010Community}.  The ability to easily construct a sample consisting of members from diverse groups has several important applications in marketing.  Marketing surveys often seek to construct stratified samples that collectively represent the diversity of the population \cite{Kolaczyk2009Statistical}.  If the attributes of nodes are not known in advance, this can be challenging.  The XS strategy, which exhibited the best \emph{community reach}, can potentially be very useful here.  Moreover, it has the added power of being able to locate members from diverse groups with absolutely no \emph{a priori} knowledge of demographics attributes, social variables, or the overall community structure present in the network.  There is also recent evidence to suggest that being able to construct a sample from many different communities can be an asset in effective word-of-mouth marketing \cite{Cao2010OASNET}.  This, then, represents yet another potential marketing application for the XS strategy.

\subsection{Landmarks and Graph Exploration}
\emph{Landmark-based methods} represent a general class of algorithms to compute distance-based metrics in large networks quickly \cite{Potamias2009Fast}.  The basic idea is to select a small sample of nodes (i.e. the landmarks), compute offline the distances from these landmarks to every other node in the network, and use these pre-computed distances at runtime to approximate distances between pairs of nodes.  As noted in \cite{Potamias2009Fast}, for this approach to be effective, landmarks should be selected so that they \emph{cover} significant portions of the network.  Based on our findings for \emph{network reach} in Section \ref{sec:rep.reach}, the XS strategy overwhelmingly yields the best \emph{discovery quotient} and covers the network significantly better than any other strategy. Thus, it represents a promising landmark selection strategy.  Our results for the \emph{discovery quotient} and other measures of \emph{network reach} also yield important insights into how graphs should best be explored, crawled, and searched.  As shown in Figure \ref{fig:rep.reach}, the most prevalently used method for exploring networks, BFS, ranks low on measures of \emph{network reach}.  This suggests that the BFS and its pervasive use in social network data acquisition and exploration (e.g. see \cite{Mislove2007Measurement}) should possibly be examined more closely.

\section{Conclusion}

We have conducted a detailed study on sampling biases in real-world networks.  In our investigation, we found the BFS, a widely-used method for sampling and crawling networks, to be among the worst performers in both discovering the network and accumulating critical, well-connected hubs.  We also found that sampling biases towards high expansion tend to accumulate nodes that are uniquely different from those that are simply well-connected or traversed during a BFS-based strategy. These high-expansion nodes tend to be in newer and different portions of the network not already encountered by the sampling process.  We further demonstrated that sampling nodes with many connections from those already sampled is a reasonably good approximation to sampling high degree nodes.  Finally, we demonstrated several ways in which these findings can be exploited in real-world application such as disease outbreak detection and marketing.  For future work, we intend to investigate ways in which the top-performing sampling strategies can be enhanced for even wider applicability.  One such direction is to investigate the effects of alternating or combining different biases into a single sampling strategy.

\balance
\scriptsize
\bibliographystyle{abbrv}

\begin{thebibliography}{10}

\bibitem{Achlioptas2005Bias}
D.~Achlioptas, A.~Clauset, D.~Kempe, and C.~Moore.
\newblock {On the bias of traceroute sampling: or, power-law degree distributions in regular graphs}.
\newblock In {\em STOC '05}.

\bibitem{Adamic2001Search}
L.~A. Adamic, R.~M. Lukose, A.~R. Puniyani, and B.~A. Huberman.
\newblock {Search in power-law networks}.
\newblock {\em Physical Review E}, 64(4):046135+, Sept. 2001.

\bibitem{Ahmed2010Timebased}
N.~K. Ahmed, F.~Berchmans, J.~Neville, and R.~Kompella.
\newblock {Time-based sampling of social network activity graphs}.
\newblock In {\em MLG '10}.

\bibitem{Ahmed2010Reconsidering}
N.~K. Ahmed, J.~Neville, and R.~Kompella.
\newblock {Reconsidering the Foundations of Network Sampling}.
\newblock In {\em WIN '10}.

\bibitem{Barabasi1999Emergence}
A.-L. Barabasi and R.~Albert.
\newblock {Emergence of Scaling in Random Networks}.
\newblock {\em Science}, 286(5439):509--512, Oct. 1999.

\bibitem{Boguna2004Models}
M.~Bogun\'{a}, R.~P. Satorras, A.~D. Guilera, and A.~Arenas.
\newblock {Models of social networks based on social distance attachment}.
\newblock {\em Physical Review E}, 70(5):056122+, Nov. 2004.

\bibitem{Boldi2004Do}
P.~Boldi, M.~Santini, and S.~Vigna.
\newblock {Do Your Worst to Make the Best: Paradoxical Effects in PageRank
  Incremental Computations}.
\newblock In {\em WAW '04}.

\bibitem{Cao2010OASNET}
T.~Cao, X.~Wu, S.~Wang, and X.~Hu.
\newblock {OASNET: an optimal allocation approach to influence maximization in
  modular social networks}.
\newblock In {\em SAC '10}.

\bibitem{Cho1998Efficient}
J.~Cho, H.~G. Molina, and L.~Page.
\newblock {Efficient crawling through URL ordering}.
\newblock {\em Computer Networks and ISDN Systems}, 30(1--7):161--172, 1998.

\bibitem{Christakis2010Social}
N.~A. Christakis and J.~H. Fowler.
\newblock {Social Network Sensors for Early Detection of Contagious Outbreaks}.
\newblock {\em PLoS ONE}, 5(9):e12948+, Sept. 2010.

\bibitem{Chung2002Connected}
F.~Chung and L.~Lu.
\newblock {Connected Components in Random Graphs with Given Expected Degree
  Sequences}.
\newblock {\em Annals of Combinatorics}, 6(2):125--145, Nov. 2002.

\bibitem{Clauset2004Finding}
A.~Clauset, M.~E.~J. Newman, and C.~Moore.
\newblock {Finding community structure in very large networks}.
\newblock {\em Physical Review E}, 70(6):066111+, Dec. 2004.

\bibitem{Cohen2003Efficient}
R.~Cohen, S.~Havlin, and D.~ben Avraham.
\newblock {Efficient Immunization Strategies for Computer Networks and
  Populations}.
\newblock {\em arXiv:cond-mat/0207387v3}, Dec. 2003.

\bibitem{Cormen2003Introduction}
T.~H. Cormen, C.~E. Leiserson, R.~L. Rivest, and C.~Stein.
\newblock {\em {Introduction to Algorithms}}.
\newblock McGraw-Hill Science / Engineering / Math, 2nd edition, Dec. 2003.

\bibitem{Costenbader2003Stability}
E.~Costenbader.
\newblock {The stability of centrality measures when networks are sampled}.
\newblock {\em Social Networks}, 25(4):283--307, Oct. 2003.

\bibitem{Feige1998Threshold}
U.~Feige.
\newblock {A threshold of ln n for approximating set cover}.
\newblock {\em J. ACM}, 45(4):634--652, July 1998.

\bibitem{Feld1991Why}
S.~L. Feld.
\newblock {Why Your Friends Have More Friends Than You Do}.
\newblock {\em The Am. J. of Sociology}, 96(6):1464--1477, 1991.

\bibitem{Fortunato2010Community}
S.~Fortunato.
\newblock {Community detection in graphs}.
\newblock {\em arXiv:0906.0612v2 [physics.soc-ph]}, Jan. 2010.

\bibitem{Frank2005Models}
O.~Frank.
\newblock {\em {Models and Methods in Social Network Analysis (Structural
  Analysis in the Social Sciences)}}, chapter~3.
\newblock Cambridge University Press, Feb. 2005.

\bibitem{Gjoka2011Walk}
M.~Gjoka, M.~Kurant, C.~T. Butts, and A.~Markopoulou.
\newblock {A Walk in Facebook: Uniform Sampling of Users in Online Social
  Networks}.
\newblock {\em arXiv e-print (arXiv:0906.0060v3)}, Feb. 2011.

\bibitem{Good2010Performance}
B.~H. Good, Y.-A. de~Montjoye, and A.~Clauset.
\newblock {Performance of modularity maximization in practical contexts}.
\newblock {\em Physical Review E}, 81(4):046106+, Apr. 2010.


\bibitem{Henzinger2000Nearuniform}
M.~R. Henzinger, A.~Heydon, M.~Mitzenmacher, and M.~Najork.
\newblock {On near-uniform URL sampling}.
\newblock In {\em WWW '00}.

\bibitem{Hoory2006Expander}
S.~Hoory, N.~Linial, and A.~Wigderson.
\newblock {Expander Graphs and Their Applications}.
\newblock {\em Bull. Amer. Math. Soc}, 43, 2006.

\bibitem{Hopcroft2007Manipulationresistant}
J.~Hopcroft and D.~Sheldon.
\newblock {Manipulation-resistant reputations using hitting time}.
\newblock In {\em WAW'07}.

\bibitem{Hubler2008Metropolis}
C.~Hubler, H.-P. Kriegel, K.~Borgwardt, and Z.~Ghahramani.
\newblock {Metropolis Algorithms for Representative Subgraph Sampling}.
\newblock In {\em ICDM '08}.

\bibitem{kannan2004clusterings}
R.~Kannan, S.~Vempala, and A.~Vetta.
\newblock {On clusterings: Good, bad and spectral}.
\newblock {\em J. ACM}, 51(3):497--515, May 2004.

\bibitem{Kleinberg2006Complex}
J.~Kleinberg.
\newblock {Complex Networks and Decentralized Search Algorithms}.
\newblock In {\em International Congress of Mathematicians (ICM)}, 2006.

\bibitem{Kolaczyk2009Statistical}
E.~D. Kolaczyk.
\newblock {\em {Statistical Analysis of Network Data}}, chapter~5.
\newblock Springer, 2009.

\bibitem{Krishnamurthy2007Sampling}
V.~Krishnamurthy, M.~Faloutsos, M.~Chrobak, J.~Cui, L.~Lao, and A.~Percus.
\newblock {Sampling large Internet topologies for simulation purposes}.
\newblock {\em Computer Networks}, 51(15):4284--4302, Oct. 2007.

\bibitem{Kurant2010Bias}
M.~Kurant, A.~Markopoulou, and P.~Thiran.
\newblock {On the bias of BFS}.
\newblock {\em Arxiv e-print (arXiv:1004.1729v1)}, Apr. 2010.

\bibitem{Lakhina2003Sampling}
A.~Lakhina, J.~W. Byers, M.~Crovella, and P.~Xie.
\newblock {Sampling Biases in IP Topology Measurements}.
\newblock In {\em INFOCOM '03}.

\bibitem{Lee2006Statistical}
S.~H. Lee, P.~J. Kim, and H.~Jeong.
\newblock {Statistical properties of sampled networks}.
\newblock {\em Physical Review E}, 73(1):016102+, Jan. 2006.

\bibitem{LeskovecStanford}
J.~Leskovec.
\newblock {Stanford Large Network Dataset Collection.
  http://snap.stanford.edu/data/}.

\bibitem{Leskovec2006Sampling}
J.~Leskovec and C.~Faloutsos.
\newblock {Sampling from large graphs}.
\newblock In {\em KDD '06}.

\bibitem{Leskovec2005Graphs}
J.~Leskovec, J.~Kleinberg, and C.~Faloutsos.
\newblock {Graphs over Time: Densification Laws, Shrinking Diameters and
  Possible Explanations}.
\newblock In {\em KDD '05}.

\bibitem{Leskovec2007Costeffective}
J.~Leskovec, A.~Krause, C.~Guestrin, C.~Faloutsos, J.~VanBriesen, and
  N.~Glance.
\newblock {Cost-effective outbreak detection in networks}.
\newblock In {\em KDD '07}.

\bibitem{Lovasz1994Random}
L.~Lovasz.
\newblock {Random Walks on Graphs: A Survey}.
\newblock {\em Combinatorics: Paul Erdos is 80}, II, 1994.

\bibitem{Lovasz2009Very}
L.~Lovasz.
\newblock {Very large graphs}.
\newblock {\em arXiv:0902.0132v1}, Feb. 2009.

\bibitem{Maiya2010Expansion}
A.~S. Maiya and T.~Y. Berger~Wolf.
\newblock {Expansion and search in networks}.
\newblock In {\em CIKM '10}.

\bibitem{Maiya2010Sampling}
A.~S. Maiya and T.~Y. Berger-Wolf.
\newblock {Sampling Community Structure}.
\newblock In {\em WWW '10}.

\bibitem{Mislove2007Measurement}
A.~Mislove, M.~Marcon, K.~P. Gummadi, P.~Druschel, and B.~Bhattacharjee.
\newblock {Measurement and analysis of online social networks}.
\newblock In {\em IMC '07}.

\bibitem{Najork2001Breadthfirst}
M.~Najork.
\newblock {Breadth-first search crawling yields high-quality pages}.
\newblock In {\em WWW '01}.

\bibitem{Newman2003Structure}
M.~E.~J. Newman.
\newblock {The structure and function of complex networks}.
\newblock {\em SIAM Review}, 45:167--256, Mar. 2003.

\bibitem{Potamias2009Fast}
M.~Potamias, F.~Bonchi, C.~Castillo, and A.~Gionis.
\newblock {Fast shortest path distance estimation in large networks}.
\newblock In {\em CIKM '09}.

\bibitem{Raghavan2007Near}
U.~N. Raghavan, R.~Albert, and S.~Kumara.
\newblock {Near linear time algorithm to detect community structures in
  large-scale networks}.
\newblock {\em Physical Review E}, 76(3):036106+, Sept. 2007.

\bibitem{Stumpf2005Subnets}
M.~P.~H. Stumpf, C.~Wiuf, and R.~M. May.
\newblock {Subnets of scale-free networks are not scale-free: Sampling
  properties of networks}.
\newblock {\em PNAS}, 102(12):4221--4224, Mar. 2005.

\bibitem{Stutzbach2009Unbiased}
D.~Stutzbach, R.~Rejaie, N.~Duffield, S.~Sen, and W.~Willinger.
\newblock {On unbiased sampling for unstructured peer-to-peer networks}.
\newblock {\em IEEE/ACM Trans. Netw.}, 17(2):377--390, 2009.

\bibitem{Tong2010Vulnerability}
H.~Tong, B.~A. Prakash, C.~Tsourakakis, T.~E. Rad, C.~Faloutsos, and D.~H.
  Chau.
\newblock {On the Vulnerability of Large Graphs}.
\newblock In {\em ICDM '10}.

\bibitem{Tsoumakos2006Analysis}
D.~Tsoumakos and N.~Roussopoulos.
\newblock {Analysis and comparison of P2P search methods}.
\newblock In {\em InfoScale '06}.

\bibitem{Watts1998Collective}
D.~J. Watts and S.~H. Strogatz.
\newblock {Collective dynamics of 'small-world' networks}.
\newblock {\em Nature}, 393(6684):440--442, June 1998.

\bibitem{Zuckerman2001What}
E.~W. Zuckerman and J.~T. Jost.
\newblock {What Makes You Think You're so Popular? Self-Evaluation Maintenance
  and the Subjective Side of the "Friendship Paradox"}.
\newblock {\em Social Psychology Quarterly}, 64(3), 2001.

\end{thebibliography}

\end{document}